\newcommand\Tau{\scalerel*{\tau}{T}}
\newtheorem{Theorem}{Theorem}[section]
\newtheorem{Corollary}[Theorem]{Corollary}
\newtheorem{Lemma}[Theorem]{Lemma}
\newtheorem{Definition}[Theorem]{Definition}
\newtheorem{Proposition}[Theorem]{Proposition}
\newtheorem{Remark}[Theorem]{Remark}
\newtheorem{Assumption}[Theorem]{Assumption}
\newtheorem{Example}{Example}
\numberwithin{equation}{section}
\def\demi{\frac{1}{2}}
\def\cal{\mathcal}
\def\A{{\mathcal A}}
\def\C{{\mathcal C}}
\def\F{{\mathcal F}}
\def\sigR{{\cal R}}
\def\bF{{\mathbb F}}
\def\P{{\mathbb P}}
\def\R{{\mathbb R}}
\def\tU{{\tilde U}}
\def\wL^*{{\widehat L^{(\mu^*, \sigma^*)}}}
\def\wL{{\widehat L}}
\def\eps{\varepsilon}
\newcommand{\rmi}{{\rm (i) $\>\>$}}
\newcommand{\rmii}{{\rm (ii) $\hspace{1.5mm}$}}
\newcommand{\rmiii}{{\rm (iii)$\>\>$}}
\newcommand{\rmiv}{{\rm (iv)$\>\>$}}
\def\bit{\begin{itemize}}
\def\eit{\end{itemize}}
\def\beqn{\begin{eqnarray}}
\def\eeqn{\end{eqnarray}}
\def\beq*{\begin{eqnarray*}}
\def\eeq*{\end{eqnarray*}}
\def\cF{{\mathcal{F}}}
\def\F{\mathbb{F}}
\def\P{{\mathbb{P}}}
\def\R{{\mathbb{R}}}
\def\exp{{\text{exp}}}
\def\bit{\begin{itemize}}
\def\eit{\end{itemize}}
\def\bc{\begin{center}}
\def\ec{\end{center}}
\def\super { \end{document}}
\def\bcom{}
\def\edoc{\end{document}}
\newcommand\wavydecor{%
    \draw[decoration={coil,aspect=0.1,segment length=5pt,amplitude=1.0pt},decorate,line width=1.5pt,black]
      (O|-P) -- (O);
}
\newmdenv[
hidealllines=true,
innerleftmargin=10pt,
innerrightmargin=0pt,
innertopmargin=0pt,
innerbottommargin=0pt,
leftmargin=-10pt,
skipabove=.5\baselineskip,
skipbelow=.5\baselineskip,
singleextra={\wavydecor},
firstextra={\wavydecor},
secondextra={\wavydecor},
middleextra={\wavydecor}
]{done}
\title{ Time-consistent pension policy with minimum guarantee and sustainability constraint\thanks{With the financial
support of Europlace Institute of Finance (EIF) for the project ``Intergenerational risk sharing in pension plans". The authors's research is part of the ANR project DREAMeS (ANR-21-CE46-0002).}}  
\author{Hillairet Caroline,~ \thanks
{\small CREST, UMR CNRS 9194, ENSAE, Institut Polytechnique de  Paris}
\and Kaaka\"i Sarah
\thanks{ \small  Laboratoire Manceau de Math\'ematiques, Institut du Risque et de l'Assurance, Le Mans Université
}\\
 \and Mrad~Mohamed~ \thanks
{\small   LAGA, UMR CNRS 7539,  { Universit\'e Sorbonne Paris Nord}}
}
\date{}
\begin{document}
 \maketitle

  \abstract{   This paper  proposes and investigates an optimal  pair investment/pension policy for  a  pay-as-you-go (PAYG) pension  scheme. The social planner can invest in a buffer fund in order to guarantee a minimal pension amount. The model aims at taking into account complex dynamic phenomena  such as the demographic risk and its evolution over time, the time and age dependence of agents preferences, and financial risks. 
  The preference criterion of the social planner is  modeled by a  consistent dynamic utility defined on a stochastic domain,   which incorporates the heterogeneity of overlapping generations and its evolution over time. The preference criterion and the optimization problem also incorporate sustainability, adequacy and fairness constraints.  
The paper designs and solves the social planner's dynamic decision criterion,  and  computes the optimal investment/pension  policy  in a general framework. A  detailed analysis for the case of dynamic power utilities is provided. 
  }

\bigskip

\noindent  {\bf Keywords}: Consistent Dynamic Utility,  PAYG Pension Policy,  Sustainability and Actuarial Fairness, Demographic  and Financial risk sharing, Stochastic Control.

\section*{Introduction}

{ In many countries,  public pensions are financed by a Pay-as-you-go (PAYG)  scheme, where current pensions are financed by the redistribution of contributions paid by working participants. In order for the pension system to remain sustainable, these contributions should match approximately the pension liabilities. Intergenerational solidarity is thus one of the main pillars of PAYG pension systems.} However, the sustainability of PAYG systems has become a key challenge for policymakers in aging countries, caused  by both the decrease in birth rates and an unprecedented increase in the  life expectancy (see e.g. \cite{aglietta2002demographie}, \cite{VAUPEL2009ageingchallenges}, \cite{masson2012partager}). { For instance  in OECD countries, pension expenditures have increased on average by 1.5 \% of GDP in nearly two decades, and  an
additional  increase of 3.5\% of GDP by 2035 is forecasted (\cite{OECD21}). As a consequence, many governments 
have engaged pension reforms with the aim to reduce pension expenditures. Among them, several countries like  Italy or Sweden have redefined the  pension amount formula from a Defined Benefit PAYG to a Notional Defined Contribution PAYG, with a pension indexation rate which can depend on demographic and/or economic indicators (see \cite{godinez2016optimal}, \cite{alonso2019continuous}).  However, even NDC PAYG pension systems are not guaranteed to be sustainable (see \cite{alonso2017adequacy}, \cite{alonso2019continuous}).}  On the other hand, pension systems should  provide adequate benefits for retirees as well as an acceptable level of fairness between generations, as underlined in \cite{alonso2017adequacy}.  {As an example, only four OECD countries do not guarantee a minimum pension amount (see \cite{OECD21}).}\\
The aim of this paper is to  propose an adaptive decision criterion in order to design an optimal policy that is consistent with both  sustainability and adequacy constraints in PAYG framework. 
{ Our contribution is twofold.\\
Firstly, we introduce a general dynamic framework for  modeling a PAYG pension system,  with a guaranteed minimum annuitized pension amount.}  Pensions are financed by the workers' contribution, and the social planner has also  the flexibility to invest/borrow  from a buffer fund to finance the pension system. The buffer fund allows the social planner to invest in  the financial market. {In addition, the reserves accumulated in the fund allows for demographic (longevity/fertility) shocks to be mitigated over time (\cite{pantelous2008dynamic}).}
An important challenge  is  to convey the complexity of  the problem, by taking into account key phenomena such as the demographic risk and its evolution over time, the time and age dependence of agents' preferences, or financial risks. To the best of our knowledge, these problems have only been tackled either partially or separately in the literature. \\
We adopt a dynamic and continuous time approach, which incorporates the heterogeneity of overlapping generations and the non-stationary evolution of the population over time.  
   A similar flexible population dynamics model is studied in  \cite{alonso2019continuous}, however with deterministic age and time dependent birth and mortality rates, while  we consider  stochastic demographic rates  in our setting. This allows us to take into account stylized fact of the population dynamics, such as uncertain longevity or  dependency ratio increases. In particular, the population age composition is not assumed to be  stationary, as it often the case (see e.g. \cite{DeMenil2006planning}, \cite{cui2011intergenerational}). Furthermore,  we do not restrict ourselves to a two-period models, as in \cite{bohn2001social}  or \cite{Devolder2015} for instance. \\
{Several papers in the literature study deterministic models of PAYG pension systems,  in which the social planner can invest in a buffer fund (see e.g. 
\cite{haberman2002investigation}, \cite{pantelous2008dynamic},  \cite{godinez2016optimal}, \cite{godinez2016finance}, \cite{alonso2019continuous}).} However, the buffer fund is assumed to have a known return. {Besides, when the  optimal pension policy is derived from optimizing solvency indicators, a sustainability constraint is only taken into account at an arbitrary terminal time, and there are no adequacy constraints.} In this paper, the sustainability of the pension scheme is ensured by imposing a pathwise solvency constraint on the buffer fund (see also \cite{di2011pension}  for a similar assumption  in the case of a fully funded pension system). \\
In \cite{gabay2012fair},  the social planner can invest in a complete market, and a minimum pension amount is considered. However, it is obviously impossible  to hedge perfectly the demographic and economic risks through the financial market. Hence, we take into account in this paper the  incompleteness of the financial market. Furthermore, the pension benefit in \cite{gabay2012fair} is a  lumpsum at retirement. {This means that the longevity risk is not taken into account,   contrary to this paper, since we  consider here an adequacy constraint (minimum pension) directly on the pension amount received by pensioners until their death.} \\[1mm]
\indent {Secondly,  this paper provides  contributions on the literature on  forward dynamic utilities with stochastic constraints and endowments.} Indeed, as the representative of past, present and future generations, 
the social planner should aggregate  preferences of all pensioners. 
This aggregation is the key in the fairness criterion as this benevolent social planner aims at  dealing with successive overlapping generations fairly. Thus,  the social planner's decision criterion that appears in the optimization problem's  formulation  is composed of the buffer fund utility and an  aggregated utility which  should capture the heterogeneous preferences of different generations.   {This motivates the use of forward dynamic utilities to  deal with such a complex framework (\cite{Aggregation})}. 
\\
Technically, the problem formulation   is related to the literature on optimal investment and consumption with labor income,  respectively corresponding in our setting to the buffer fund, pensions  and contributions.  
The literature usually states this optimization problem in a backward formulation, see e.g. \cite{he1993labor}, \cite{cuoco1997optimal}, \cite{el1998optimization}, or more recently \cite{mostovyi2020optimal} for a general setting in incomplete markets. However the backward approach has several drawbacks when considering the framework of PAYG pensions. First it does not incorporate any changes in the agents' preferences, or any uncertain evolution of the environment variables. Furthermore, in the context of pensions, fixing a time-horizon is difficult and can lead to optimal choices that depend on the time horizon, inducing  artificial phenomena  such as the  fund liquidation at terminal time (see e.g. \cite{gabay2012fair}, \cite{godinez2016finance}). Finally, the social planner should be able to 
 identify  her preference at any intermediate  time, in order to ensure consistency across time and generations. \\
 Introduced by Musiela and Zariphopoulou \cite{zarD,zar2011}, the framework of dynamic utilities {(also called forward preference processes)} is well suited to solve the issues raised above. Dynamic utilities allow their users to  propose long-term, time-coherent policies adjusted to the information flow, in non-stationary and uncertain environment. {The framework  has been generalized  and extended to many classes of dynamic utilities (see e.g \cite{zar-04}, \cite{MrNek1}, \cite{zar2017}, \cite{mrad21}) and to many applications, including (among others) risk measures (\cite{zariphopoulou2010maturity}), pricing and risk sharing (\cite{anthropelos2014forward}), yield curve modeling (\cite{el2022ramsey}), or competition in fund management (\cite{anthropelos2022competition,dos2022forward}). In a recent preprint, \cite{ng2023optimal} study the optimal investment of an individual worker in a funded defined contribution pension fund. To the best of our knowledge, our paper is the first attempt to study PAYG pension schemes from a dynamic utility point of view. 
 The setting is close to the forward  dynamic utilities of investment and consumption framework  in \cite{MrNek07},  whose  results   are extended here by considering  a given continuous stream of income (contributions) and  stochastic pathwise constraints on the buffer fund (wealth) and pensions (consumption). }\\
The population model, PAYG pension system with sustainability and adequacy constraints, and incomplete financial market  are  introduced in Section \ref{SectionModel}. The social planner's dynamic decision criterion is formulated in Section \ref{SectionDecision}, by introducing dynamic utilities defined on stochastic domains. In particular, we give sufficient conditions on the local characteristics of the dynamic utility for   the utility to be  well defined. The aggregation of  pensioners' preferences is introduced in Section \ref{SubSecBuffUtility}.  The main results of the paper are presented in Section \ref{SectionMainResults}. We first introduce a natural consistency HJB-SPDE derived from the dynamic programming principle, so that the  decision criterion is consistent over time. Under this sufficient condition, the optimal constrained investment and pension policy is derived explicitly in Theorem \ref{ThOpt}. The remainder of the section is dedicated to proving this result. Section \ref{sec:exCRRA} details two examples in the case of dynamic power utilities. 

\section{The model}
 \label{SectionModel}
 All stochastic processes are defined on a standard filtered probability space
$ (\Omega, \mathcal F, {\mathbb F},\mathbb{P})$, where
the filtration ${\bF}=(\cF_t)_{t\ge 0}$ is the natural filtration generated a $n$-dimensional Brownian motion $W$,   and that is  assumed to be right continuous and
complete.

\subsection{Population dynamics}
{We consider a population structured continuously in age and time, characterized by the family $(n(\cdot,a))_{a\geq 0}$ of $\mathbb F$-adapted processes, where $n(t,a)$ is the number of individuals of age $a$ at time $t$. For ease of notation, we consider in this paper a single-sex framework. We only assume that the total population size  remains finite, i.e.: 
\begin{equation*}
N_t : = \int_0^\infty n(t,a) da <\infty, \quad \forall \; t\geq 0 \; \;  \mathbb{P}\text{-a.s.}
\end{equation*}
This general framework allows for the modeling of the stylized facts and \textit{non-stationarity} of an aging population evolution.  In particular, the population evolution is usually characterized by age-specific demographic (mortality and birth) rates. In our general framework, they can be  modeled for instance by   a family of $\mathbb F$-adapted nonnegative processes $(d(\cdot,a))_{a\geq 0}$ and $(b(\cdot,a))_{a \geq 0}$. Time-dependent stochastic birth and mortality rates can describe the uncertain aging of the population, by taking into account phenomena such as the decrease of mortality rates over time (longevity risk), or birth rates declines.  For instance,  stochastic mortality models fitted to real data (see e.g. \cite{CAIRNS2008},\cite{andres2018stmomo}) can be used as inputs for death rates.}

The population dynamics can be derived from these rates. Between a small period of time $[t,t+ dt]$, a (random) proportion $d(t,a)dt$ of individuals of age $a$ die, while individuals of age $a$ give birth to $b(t,a)n(t,a)$ individuals. In this example, the population dynamics is formally described by a partial differential equation with stochastic coefficients, generalizing the standard McKendrick-Von Foerster  equations:
\begin{align}
& (\partial_t + \partial_a) n(t,a) = - d(t,a)n(t,a), \\
& n(t,0) = \int_0^\infty b(t,a)n(t,a)da. 
\end{align}
A detailed analysis of such equations can be found in \cite{KAAKAI201916} (see also e.g. \cite{Hopp75}, \cite{webb1985}).  
In particular, such models  can be easily extended  to include intra-cohort heterogeneity or exogenous population flows.   \\

Workers are assumed to enter the work force at  fixed age $a_e$, and retires at age $a_r$.  Thus, the number of pensioners at time $t$ is 
\begin{equation}
\label{Retirees}
N_t^r=\int_{a_r}^{\infty} n(t,a) da 
\end{equation}social
and the number of workers is 
\begin{equation}
\label{Workers}
N_t^w=\int_{a_e}^{a_r} n(t,a) da 
\end{equation}
In the following, all results can be  straightforwardly extended when the retirement  age $a_r(t)$ (or $a_e(t)$) depends on time. For ease of notations, the time variable is omitted in the following.
\subsection{Pension system}
We define below the aggregate contribution rate process $C_t$ payed by the workers at time $t$, and the aggregate pension rate process $P_t$ received by the pensioners at time $t$.  In what follows, we assume that individuals in a given cohort (of age $a$ at time $t$) receive the same wage and  pension amount.   In the case of heterogeneity inside the cohort, the age-dependent wage and pension processes can be interpreted as  the average amounts over the cohort.
\paragraph{Contribution process} The wage process of any  worker of age $a \in[a_e,a_r]$ at time $t$ is assumed to be an exogenous $\mathbb F$-adapted process $(\mathfrak{e}_t(a))$.  All workers contribute a fixed proportion $\alpha_c \in [0,1]$ of their wage in order to finance pensions of current pensioners.  The aggregate contribution process $(C_t)$ is thus defined by 
\begin{equation}
\label{ContributionProcess}
C_t = \alpha_c   \int_{a_e}^{a_r} \mathfrak{e}_t(a)  n(t,a) da,  \quad \forall \; t\geq 0. 
\end{equation}
If the wage does not depend on the age, then $C_t = \alpha_c \mathfrak{e}_tN_t^w.$
\paragraph{Pension process and adequacy constraint}  In order to achieve a minimum adequacy of retirement incomes, we consider the following pension mechanism: Each pensioner of age $a$ at time $t$ receives a pension amount $p_t(a)$ depending on her age, with
\begin{equation}\label{eqpensionage}
p_{t}(a) = p^{min}_{t}(a) \rho_t, \quad a\geq a_r, \; t\geq 0.
\end{equation}
$p^{min}_{t}(a)$  corresponds to a minimum pension amount guaranteed to pensioners (see examples below) and $(\rho_t)$ represents a global adjustment with respect to the benchmark, which  is the same for all pensioners living at time $t$.  The pension process should thus satisfy the  \textbf{adequacy constraint} \eqref{AdequacyConstraint} below:
\begin{equation}
\label{AdequacyConstraint}
\forall t\geq  0,  \forall a \geq a_r, \;p_{t}(a)  \geq p^{min}_{t}(a)  \text{ a.s., \; or equivalently } \rho_t \geq 1  \text{ a.s.}
\end{equation}
Finally, the total pension amount is 
\begin{equation}
 \label{Pensionprocess}
P_t  = \rho_t P_{t}^{min}, \text{ with } P_t^{min } = \int_{a_r}^\infty p^{min}_{t}(a) n(t,a)da. 
\end{equation}
\noindent In the following, two examples are studied as an application of the general results:  
\begin{Example}\label{Ex1}{\normalfont 
When $p^{min}_{t}(a) \equiv  p_t^{min}$, all pensioners at time $t$ receive the same pension amount $ p_t^{min} \rho_t$ and the minimal aggregate pension amount is $P^{min}_{t}=N_{t}^rp^{min}_{t}$.
 For example, $p_t^{min}$ can be indexed on current wages, contributions, or any other indicator.}
\end{Example}
\begin{Example}
\label{Ex2}{\normalfont 
 A more realistic choice for $p^{min}$ is to take  a base pension computed  at retirement date, multiplied by an indexation factor: 
\begin{equation}
\label{Eqpref_ex2}
p^{min}_{t}(a) = p_{ret}(a_r + t-a) \; e^{\int_{a_r+t-a}^t\lambda_s ds}.
\end{equation}
$\bullet$  $p_{ret}(s)$ corresponds to the base pension amount received by an individual retiring (i.e. of age $a_r$) at time $s$. Then $p_{ret}(a_r + t-a)$ is the pension amount received at retirement time by an individual of age $a$ at time $t$.   For instance, $p_{ret}(s)$ can be  a proportion $\alpha_p$  of the average yearly income of an individual retiring at time $s$. Then
\begin{equation}\label{defPastcontrib}
p_{rep}(s) =  \alpha_p \frac{c_{s}(a_r )}{a_r - a_e}, \text{ with } c_{t}(a)=  \int_{t - a + a_e}^{t- a+ a_r}   e^{\int_u^t r_s ds } \mathfrak{e}_u(a +u-t)  du.
\end{equation}
$c_{t}(a)$ is the present value of  wages earned by an individual's of age $a$ at $t$.\\ 
One can also take instead  the average income over the last  $h$ years before retirement, then $p_{ret}(s) =  \alpha_p \frac{\int_{s-h}^{s}   e^{\int_u^s  r_v dv }\mathfrak{e}_u(a +u-t) du}{h}$.\\
$\bullet$ $(\lambda_t)$ is the indexation rate adjusting pension benefits. The indexation rate  takes into account changes in prices or wages, using for example the Consumption Price Index. It can be used  to maintain the sustainability of the pension system is case of a demographic shock, 
such as in  \cite{alonso2017adequacy}. }
\end{Example}
\subsection{Buffer fund}
{In a ``pure" PAYG scheme,  the aggregate pension amount $P_t$  to be paid at time $t$  to current pensioners is only financed  by the  contributions  of current workers $C_t$, inducing the budget constraint: 
\begin{equation}
\label{budgetconstraintPAYG}
P_t = C_t , \quad \forall t\geq 0.
\end{equation}
The budget constraint \eqref{budgetconstraintPAYG} ensures the sustainability/self-financing of the PAYG system.  When the  contribution rate $\alpha_c$ is fixed, the adjustment coefficient $\rho_t$ is  automatically determined by: 
\begin{equation*}
\rho_t = \frac{C_t}{P_t^{min}} = \alpha_c \frac{\int_{a_e}^{a_r} \mathfrak{e}_t(a)  n(t,a) da}{\int_{a_r}^\infty p^{min}_{t}(a) n(t,a)da}, 
\end{equation*} 
and does not necessarily satisfy the adequacy constraint \eqref{AdequacyConstraint}.   The adequacy constraint yields a liquidity issue for the PAYG system, since the minimum pension amount $P^{min}_t$ is not  covered by the workers contributions when $P^{min}_t  \geq C_t$.\\
In particular, the retirement income can be drastically reduced when  there is a decrease in the working population $N_t^w = \int_{a_e}^{a_r} n(t,a) da$, resulting in a decrease in the social planner's stream of income, or an increase of the retired population
$N_t^r = \int_{a_r}^{\infty} n(t,a) da$, resulting in an increase of the pension amout $P_t$. This  ``demographic risk" can be induced by a fertility shock or an decrease over time of death rates for older ages, as currently observed in most countries  (see e.g. \cite{barrieu2012understanding} on the longevity risk).  Pensioners  also bear all the economic risk  through the wage process $(\mathfrak{e}_t)$.
We thus consider a system where the social planner can invest (or borrow) at each time the amount $C_t - P_t$ in (from) a buffer fund, hence sharing the demographic risk between the different generations.}
\paragraph{Financial market}  As the demographic (and economic) risks are  obviously not fully transferable to financial markets, we consider   an incomplete It\^o  market,  with short rate $(r_t)$ and $d$  risky  assets ($d \leq n$). The price process $S = (S^i)_{i=1,\dots,d}$ of the risky assets is defined by 
\begin{equation}
dS^i_t = S^i_t (\mu^i_t dt + \sum_{j=1}^n \sigma^{i,j}_t dW^j_t), \quad \quad  i=1,\cdots,d.
\end{equation}
 The $d \times n$  volatility matrix   $(\sigma_t)$  is assumed of full rank (that is $ (\sigma_t{}^{tr}\sigma_t)$ is invertible, where ${}^{tr}\sigma_t$ is the transposed matrix).  
The  $d$-dimensional risk premium vector is denoted by $(\eta_t)$, where  $\eta_t={}^{tr}\sigma_t. (\sigma_t{}^{tr}\sigma_t)^ {-1 } (\mu_t -r_t \bf{1}_d)$.  
 $\eta_t $  is in  the   vector subspace $ \mathcal{R}_t = {}^{tr} \sigma_t( \mathbb R^d)$   of $\mathbb R^n$. We assume that $\int_0^T( |r_t| + \|\eta_t\|^2)dt<\infty$, for any $T>0$, $\mathbb P$.a.s. \\
In what follows, for any $ \mathbb R^n$-valued  stochastic process $(X_t)$,  we denote by $X^\mathcal{R}$ the process such that  for all $t$,  $X_t^\mathcal{R}$ is the orthogonal projection of $X_t$ onto $\mathcal{R}_t$. Similarly $X_t^\perp$  denotes  the   orthogonal projection  of $X_t$ onto the orthogonal vector subspace  $\mathcal{R}_t^\perp$.\\
{Note that thanks to the weak assumptions made on the demographic,  wage and financial processes, any kind of dependency structure with the financial market  can be taken into account in the modeling.}
\paragraph{Investment in the buffer fund} On this (incomplete) market,  the  social planner manages a buffer fund  that aims to absorb demographic and economic shocks. 
The dynamics of the fund can be considered as the financial wealth of a single agent with  a labor income (or endowment) $C_t = \alpha_c   \int_{a_e}^{a_r} \mathfrak{e}_t(a)  n(t,a) da$, and consumption process $P_t$. 
The amount of money invested in the risky assets at time $t$ is denoted by the  $d$-dimensional vector $\phi_t$. Then the 
self-financing dynamics for the wealth $F$ of the buffer fund with contribution $C_t$ and pension $P_t$ is 
$$d F_t=  F_t  r_t dt + (C_t-P_t) dt +  \phi_t \cdot \sigma_t . ( dW_t + \eta_t dt).
$$
In order to ensure the sustainability of the pension system, a maximum  amount of debt  is imposed to the buffer fund :
\begin{equation}
\label{ConstraintK2}
\text{(Sustainability constraint)} \quad \forall t \geq 0,  \;   F_t  \geq  \mathfrak{K}_t , a.s., 
\end{equation} 
$ \mathfrak{K}_t$   is an adapted predictable process, that can be negative, and $( - \mathfrak{K}_t)$   corresponds to the  social planner maximum amount of debt  at time $t$.
For instance, $(-\mathfrak{K}_t)$ can represent a proportion of the Gross Domestic Product (GDP), while the no borrowing constraint corresponds to $ \mathfrak{K}_t \equiv 0$. We assume throughout the paper that $\mathfrak{K}$ is an Itô process:
\begin{equation}
\label{EqDynK}
d\mathfrak{K}_t = \mu_t^{\mathfrak{K}} dt + \delta^{\mathfrak{K}}_t \cdot dW_t. 
\end{equation}
\paragraph{Social planner strategy} The strategy of the social planner (pensions and investment) is parametrized by $(\pi, \rho)$ where  $P_t = \rho_t P^{min}_t$ and $\pi_t := {}^{tr}\sigma_t  \phi_ t \in \mathcal{R}_t$. The dynamic dynamic of the buffer with strategy $(\pi, \rho)$ is thus given by: 
\begin{equation}
\label{EqFond}
d F^{\pi, \rho}_t=  F^{\pi, \rho}_t  r_t dt + (C_t-\rho_t P^{min}_t) dt +  \pi_t  \cdot  ( dW_t + \eta_t dt).
\end{equation}
Observe that the previous equation is not in a multiplicative form since the value of the fund  $F$ can be negative.
\begin{Definition}[Admissible strategy]
An $\mathbb{F}$-adapted strategy  $(\pi,\rho)$ is said to be admissible  if and only if
 \begin{itemize}
\item $\pi_t \in \cal R_t$,   $\; \; \forall \; t\geq 0 \;  \; \P\text{-a.s.} $
\item  $\int_0^t \left( | C_s - \rho_s P^{min}_s | + \| \pi_s\|^2 \right)ds <\infty,   \; \; \forall \; t\geq 0 \;  \; \P\text{-a.s.} $
\item $\rho_t \geq 1$ (adequacy) and   $ F_t^{\pi,\rho} \geq  \mathfrak{K}_t$ (sustainability) $\; \; \forall \; t\geq 0 \;  \; \P\text{-a.s.} $
 \end{itemize}
 The set of all  admissible strategies   $(\pi,\rho)$ is  denoted $\A$.
\end{Definition}
Lastly, we introduce  the class of the so-called state price density processes (taking into account the discount factor), also called discounted  pricing kernels. 
The discounted pricing kernels $Y$  are characterized by the property that for any admissible strategy $(\pi,\rho)$, the current wealth plus the aggregate pension minus the  contributions, all discounted by $Y$(that is  $Y_t F^{\pi, \rho}_t +\int_0^t  (\rho_s P^{min}_s-C_s) Y_s ds  $) is a  local martingale (see  \cite{el2022ramsey} for  financial and economic viewpoints on the discounted  pricing kernels).
Discounted  pricing kernels are  positive Itô-semimartingale with   the following dynamics characterized by an orthogonal volatility component $ \nu_t \in \mathcal{R}_t^\perp$.
\begin{Definition}(State price density process/discounted  pricing kernel).
 A positive Itô semimartingale $Y^\nu$   is called an admissible state price density process  (or discounted  pricing kernel) if it has  the differential decomposition 
 \begin{equation}\label{Ydyn}
 dY^\nu= Y^\nu[ - r_t dt + (\nu_t - \eta_t) \cdot dW_t], \quad \nu_t \in \mathcal{R}_t^\perp , \; Y_0^\nu = Y_0.
 \end{equation}
\end{Definition}

\section{The social planner's dynamic decision criterion}
\label{SectionDecision}
The pension allocation and the investment in the buffer  fund is decided by a social planner. Her decision are taken upon a preference criterion that should take into account  present and future  generations. Since the social planner  has to  aggregate the  heterogeneous preferences of the different cohorts, her utility criteria is necessarily complex. \\
Moreover, due to the long-term characteristics of pension schemes,  the decision criteria should be adapted  to the non-stationary  demographic, economic and financial  environment in order to provide a consistent strategy in the long run. 
For both of these aspects, it has been shown that dynamic utility provides a flexible framework   to handle this heterogeneity and to propose  long-term policies  coherent in time (\cite{Aggregation},\cite{MrNek07}). 

\subsection{Definition of Dynamic Utilities}
\label{DefDynUtility}
Dynamic utilities extend the standard notion of deterministic utilities,  by allowing the 
 utility criterion to be  dynamically
adjusted  to  
the available information represented by the filtration $(\cF_t)_{t \ge 0}$. 
A dynamic utility $U$  is  a then collection of random utility functions $\{U(t,\omega, z)\}$ whose   temporal evolution is  ``updated"  in accordance with the new information $(\cF_t)$, starting from an initial utility function $u(z)=U(0,z)$ (which is deterministic if $\cF_0$ is trivial). Throughout the paper, we adopt the convention of  small letters  for deterministic utilities and capital letters  for stochastic utilities.  The specificity of this paper lies in the sustainability and adequacy constraints for the buffer fund and the pension process, which will be taken into account in the definition domain of dynamic utilities. In the following, we give a precise definition of this extension of dynamic utilities on stochastic   domains. 
\begin{Definition}[Dynamic Utility on  stochastic  domain]${}$\\
\label{DynamicUtility}
A dynamic  utility defined on a random domain $U = (U(t,z,\omega))$  is a collection of random utility functions defined on  a stochastic domain $ \mathcal D_U :=\{ (t,z,\omega),  z \geq X_t(\omega) \}$ such that \\ 
\rmi  $(X_t)$ is an $\F$-adapted  process.\\
\rmii  For all $t \geq 0$, for all $z \geq X_t$,  $U (t, z)$ is $\mathcal F_t$-adapted. \\
\rmiii There exists $N \in \mathcal F$ with $\P(N)=0$, such that  for any $\omega\in N^c$,   and for any $t \geq 0$,  the functions $z \in [ X_t(\omega), \infty[  \mapsto U(t,z,\omega)$  are  nonnegative,  strictly concave increasing  functions  of class $\C^{2}$ on $]X_t(\omega), \infty[$.\\
\rmiv Inada conditions: 
$\quad \lim\limits_{z \rightarrow X_t(\omega)} U_z(t, z,\omega)= + \infty  \text{ and }  \lim\limits_{z \rightarrow + \infty} U_z(t, z,\omega)= 0, \quad \forall t\geq 0\;  \mathbb{P} \text{-a.s.}$
\end{Definition} 

\noindent As for a standard utility function, the   risk aversion coefficient ${\rm R_A}(U)$ is measured by the ratio ${\rm R_A}(U)(t,z)=-U_{zz}(t,z)/U_z(t,z)$  and the relative risk aversion by ${\rm R^r_A}(U)(t,z)=z\,{\rm R_A}(U)(t,z)$. Note that for dynamic utilities, ${\rm R_A}$ and ${\rm R^r_A}$ are random coefficients.
\begin{Remark}\label{RemarkInada} Note that the Inada condition in $X_t$ implies that  the absolute risk aversion coefficient explodes in $X_t$: 
\begin{equation}\label{consInada}
\lim_{z \rightarrow X_t(\omega)}  R_A(U)(t,z) = \lim_{ z \rightarrow X_t(\omega)}  - \frac{{U}_{zz}(t,z)}{ U_z(t,z)  }=+ \infty, \;  \mathbb{P} \text{-a.s }.
\end{equation}
\end{Remark}
\noindent Indeed, denoting $f(t,z):= -\frac{{U}_{zz}(t,z)}{ U_z(t,z)  }$,  we have that $U_z(t,z) = U_z(t,z_0) \exp(\int_z^{z_0} f(t,u)du)$  for  $z_0>z$  greater than $X_t$.
Then, for a fixed $z_{0}$,  the Inada condition  $\lim\limits_{z \rightarrow X_t(\omega)} U_z(t, z)= + \infty$  implies   $\lim\limits_{ z \rightarrow  X_t(\omega)}  \int_{z}^{z_{0}} f(t,u)du = +\infty$ and therefore $\lim\limits_{z \rightarrow X_t(\omega)} f(t,z)=+ \infty .$\\

\noindent  The Fenchel-Legendre convex conjugate of a dynamic utility $U$  defined on a stochastic domain $ \mathcal D_U :=\{ (\omega, t,z),  z \geq X_t(\omega) \}$ is denoted by $\tU$,  where   $\tU$  satisfies  
$$\tU(t,y)=\sup_{z \geq X_t(\omega) }\big(U(t,z)-yz), \quad y \in \mathbb R^+ .$$
 In particular, $\tU(t,y)\geq U(t,z)-yz$ and the maximum is attained at  $U_z(t,z)=y$. Note  that  $\tU$  is defined on $\R^+ \times \R^{+}$ thanks to Inada conditions on $U$.  
 $\tU$  is  { twice continuously} $y$-differentiable, strictly convex, strictly decreasing.
 Moreover, the marginal  utility $U_z$ verifies $U_z^{-1}(t,y)=-\tU_y(t,y)$;  $\tU(t,y)=U\big(t,-\tU(t,y)\big) +\tU_y(t,y)\, y$, and $U(t,z)=\tU\big( t,U_z(t,z)\big)+z U_z(t,z)$.\\

\noindent {As this paper has an economic motivation, it is  natural  to study the  particular example of CRRA utility functions, and to consider their dynamic versions.}

\paragraph{Dynamic CRRA utilities}  Deterministic Constant Relative Risk Aversion (CRRA) utilities  is  the standard framework  in the economic literature, as explained in \cite{wakker2008explaining}. They belong to the class of 
deterministic Hyperbolic Absolute Risk Aversion (HARA) utility (see e.g. the seminal paper of Merton \cite{merton1975optimum}, or Kingston  and Thorp \cite{kingston2005annuitization}) 
 characterized by an  hyperbolic  absolute risk aversion coefficient:
 \begin{equation}\label{HARA}
 {\rm R_A}(u)(z)=-\frac{u_{zz}(z)}{u_z(z)} =\frac{1}{az+b},\quad  \text{with } a  >1  \text{ and } b \in \mathbb R.
 \end{equation}
Integrating  \eqref{HARA} gives that 
$u(z) = \frac{ (z+b/a)^{1-\theta}}{1-\theta}$, defined for   $z > -b/a$, and  where $\theta=1/a$. The case $b=0$ corresponds to CRRA utility (also called power utilities) with constant relative risk aversion $\theta$.  
 Hereafter, we extend the notion of deterministic CRRA utilities  to dynamic ones,  still with deterministic  $\theta$.  {As it is well known, CRRA dynamic utilities  are necessarily time-separable.\footnote{Indeed, assuming 
$-\frac{zU^{(\theta)}_{zz}(t,z)}{U^{(\theta)}_{z}(t,z)}= \theta  \in ]0,1[$ 
implies  $\log(U^{(\theta)}_{z}(t,z))=- \theta \log(z)+c_{t}$ for $c$ an $\mathbb F$-adapted process, thus $U^{(\theta)}(t,z)=\frac{z^{1-\theta}}{1-\theta} Z_{t}$  with $Z=e^c$.}
}

 \begin{Definition}
 {\it Dynamic CRRA utility}  $ U^{(\theta)}(t,z)$, with  $\theta \in ]0,1[$, are defined by 
\begin{equation}\label{CRRAF}
U^{(\theta)}(t,z) := Z^{u}_t  \frac{(z - X_t)^{1-\theta}}{1-\theta},  \;  \mbox{ for  }  z \geq   X_t
\end{equation}
where  $X_t$ is a stochastic process and $Z^{u}_t$ is a positive  stochastic coefficient reflecting the random evolution of the time preferences.
\end{Definition}
\noindent For instance $X_t$ can represent a borrowing constraint of the social planner. 
 Dynamic  CRRA   utilities satisfy  Inada conditions :    $$\lim\limits_{z\to \infty} U^{(\theta)}_z(t,z)=0 \mbox{  and }  \lim\limits_{z\to X_t(\omega)} U^{(\theta)}_z(t,z)=\infty, \quad \forall t, \; \mathbb P \, a.s..$$
 
\noindent CRRA dynamic utilities are also studied in a stochastic factor framework in \cite{zar2017}, where the time factor $Z^{u}$ is  solution of an ergodic BSDE.   In the dynamic setting, the time factor $Z^u$ is more general than a  time preference coefficient, in particular its diffusion coefficient  will impact the optimal strategies (cf. Section \ref{sec:exCRRA}).  Indeed the dynamics of  $Z^u$ is determined by  the consistency condition    (in the sense of Definition \ref{def:consistent system}).   \\
 An advantage of  CRRA  dynamic utilities is that they are particularly tractable and lead to explicit solutions. 
 Other classes of dynamic utilities may  also be considered, such as   exponential  (see \cite{zarD}),   time-monotone (see \cite{zar2011}),  or mixture   of dynamic utilities (see \cite{mrad21}). Nevertheless, more general classes are less tractable and  lead to less explicit results (especially concerning the consistency condition), one could then rely on numerical tools to exhibit the optimal solutions.\\

\subsection{ Buffer fund and pensioners dynamic utilities}
\label{SubSecBuffUtility}
{The aim of the social planner is to optimize the pension of present and future pensioners, and thus has to take into account an aggregation of cohorts. The buffer fund $F$ plays the role of a risk sharing mechanism between generations. The greater are the part of the contribution $C$ invested in the fund, the more reserve are available to pay future pensions and to face unpredictable future demographic and economic risks. On the other hand, an adequate level of pension should be paid to current retirees. This tradeoff is summarized by  the social planner's preference process  defined as $U(t,F_t) + \int_0^t V(s,\rho_s)ds$}.  $U$ is the buffer fund dynamic utility ({representing future generations}), and $V$ is the aggregate dynamic  utility of the pensioners. \\
The sustainability constraint  \eqref{ConstraintK2} and  the adequacy constraint \eqref{AdequacyConstraint}  are translated into stochastic domains for both $U$ and $V$. 
The supermartingale property induced by the dynamic programming principle  will be  translated into condition on local  characteristics of $U$. This explains why we assume stronger  regularity conditions on $U$ than on  $V$. 
\begin{Definition}[Buffer fund utility]\label{defU}
The buffer fund utility $U$ is a dynamic utility with domain $\mathcal{D}_U =\{ (\omega,t,z), \; z\geq \mathfrak{K}_t \}$, where  $(\mathfrak{K}_t)$ is the sustainability bound satisfying \eqref{EqDynK} and with $U(t,\cdot)$ of class $\mathcal{C}^{3,\delta}$,  that is of class $\mathcal{C}^3$ with ${U_{zzz}}$ $\delta$-H\"older, $\delta \in ]0,1[$. 
\end{Definition}

\noindent   The preference process  of a pensioner of age $a$ at time $t$,  is defined by  $\bar{v}(t,a, p_t(a))$, with $p_t(a)$ then pension amount, and  $\bar{v}$ a dynamic utility depending of the pensioner's age and  taking into account uncertain future changes in the pensioners' preferences.  \\
The social planner aggregates preferences of all pensioners to obtain the aggregated pensioners' dynamic utility defined by  $\int^\infty_{a_r}  \bar v(t, a,  p_{t}(a)) \omega_{t}(a) n(t,a)da$, with $\omega_{t}(a)$ the weight given to  a pensioner of age $a$ at time $t$. For instance, the social planner can take into account the actuarial fairness by giving more weight to pensioners who contributed more. \\
Recalling that $p_{t}(a) = p^{min}_{t}(a) \rho_t$,  we can define   a weighted dynamic utility $v$ applied to  $\rho_t$, with 
\begin{equation}\label{eq:vweighted}
v(t,a,\rho) =  \omega_{t}(a) \bar v(t, a,  p^{min}_{t}(a)\rho).
\end{equation}
In the following, we refer   to $v$ as a pensioner dynamics utility from the viewpoint of the social planner.  
\begin{Definition}[Pensioners' utility]
\label{hypV}
We assume that a pensioner at time $t$ and of age $a$ has the dynamic utility $v(t,a,\cdot)$, defined as in \eqref{eq:vweighted} on a domain $[\underline{\rho}_t,+\infty[$, with  $\underline{\rho}_t \leq 1$. The aggregated utility of pensioners is defined by: 
\begin{equation}\label{defaggregateV}
V(t,\rho)= \int^\infty_{a_r}  v(t, a,  \rho) n(t,a) da.
\end{equation}
 \end{Definition}
\noindent {\bf Example} \ref{Ex1} We come back to  the case where each pensioner receives the same pension amount  $p_t=\rho_t p^{min}_t \geq p^{min}_t$ and has the same dynamic utility $\bar v(t,p)$. In order to take into account actuarial fairness, the  social planner can  aggregate preferences of all living pensioners by weighting their utility  by  their past contributions  $\alpha_c c_{t}(a)$ (in the same spirit of \cite{gabay2012fair}), introduced in \eqref{defPastcontrib}. Then,
\begin{equation*}
v(t,a, \rho) = \alpha_c c_{t}(a)\bar v(t, \rho \, p^{min}_t), 
\end{equation*}
and the aggregate utility from pensions, taking into account the weight of each pensioners' generation, is
\begin{equation}
\label{eqVexample1}
V(t,\rho) =\int^\infty_{a_r} \bar v(t,\rho \, p^{min}_t )\alpha_c c_{t}(a)n(t,a) da=  \omega^r_t \bar v(t, \rho \, p^{min}_t),
\end{equation}
with the  total weight of all pensioners living at time $t$ given by
\begin{equation}\label{eqpoidsexample1}
\omega^r_t = \alpha_c   \int_{a_r}^{\infty}  c_{t}(a) n(t,a) da .
\end{equation}

\noindent {\bf Example} \ref{Ex2}: In the  second   example of age-dependent pension, 
$p_{t}(a) = p^{min}_{t}(a) \rho_t$ with \\
$p^{min}_{t}(a) = p_{ret}(a_r + t-a) e^{\int_{a_r+t-a}^t\lambda_s ds}$.%
 We further assume that the pensioners utility $\bar v(t,a,p)$ may depend on their age, 
so that  $v(s,a,\rho)  =\bar v(s,a, \rho \, p^{min}_{s}(a))$.  Then the  actuarial fairness criteria is taken into account via the  choice of initial pension amount $p_{ret}(s)$, which increases with the past contribution of the individual retiring at time $s$.  Thus, the social planner may aggregate the pensioners' utility without using any correcting weight. 
Then 
\begin{equation}\label{EqVexample2}
V(t,\rho) = \int^\infty_{a_r} \bar v(t, a, \rho \, p_{ret}(a_r + t-a)e^{\int_{a_r+t-a}^t\lambda_u du} ) \, n(t,a) da
\end{equation}
We introduce the  following assumption on  the   rates of  increase of the marginal dual utilities   $\tilde{V}_\rho $ and $\tilde{U}_z$,  that  will play a role in proving the existence of an admissible optimal portfolio in Theorem \ref{ThOpt}.
\begin{Assumption}\label{hypVbis}
 We assume the existence of a locally integrable  process $B$ such that
\begin{equation}
\label{HypTildeV}
| \tilde{V}_\rho(t,z) - \tilde{V}_\rho (t, z')| \leq B_t \, |\tilde{U}_z(t,z) - \tilde{U}_z(t,z')|,  \; \; \mbox{ for } t \geq 0, \; z>0, \;  z'>0.
\end{equation} 
\end{Assumption}
\noindent Note that  the relative weight of the buffer fund utility $U$ with respect to the pensioners' utility $V$ will play a decisive role. Informally, if the utility of the fund $U(t,F_t)$ is small compared to the pensioners' aggregate utility $V(t,\rho_t)$, the  social planner takes more risks regarding the pension system sustainability (see Section \ref{sec:exCRRA} for more details in the CRRA framework).

\subsection{Consistency Property  }

\textbf{Social planner optimization problem} 
 The social planner has to manage a tradeoff between the pension payed to the pensioners and the fund that constitutes reserves for the future generations, among all the admissible strategies satisfying the sustainability and adequacy. In the usual  setting, the optimization program is posed backward. It is  formulated on a given horizon $T_H$, and is written at time $t=0$ as follows (given $F_0=x$):\\[-2mm]
\begin{equation} \label{optequilib}
\mathcal{U}(0,x):=\underset{(\pi,\rho) \in \A }{\sup} \, \mathbb{E} \big( u( T_H,F^{\pi,\rho}_{T_H}) +\int_0^{T_H} V(t,\rho_t)dt   \big). 
\end{equation}
 In the  backward formulation, the utilities $u$  of terminal wealth (at $T_H$) and $V$ of pension rate are given.  In our context of intergenerational risk-sharing for pensions, fixing a (long-term) time-horizon $T_H$ and even more a utility function $u(T_H,.)$ seems artificial. Extending  the optimization program and the optimal strategy  to a horizon larger to $T_H$, in a time-consistent way, is also a difficult issue.  In order to ensure consistency across time and generations, the social planner should be able to 
 identify which ``terminal" criterion $U(T,.)$ should be considered at any intermediate date $T \leq T_H$, while still leading to the same optimal strategy and   the same value  $\mathcal{U}(0,x)$, that is satisfying \\
 [3mm]
\hspace*{2cm} for any $T \leq T_H$, $\;  \mathcal{U}(0,x)=\underset{(\pi,\rho) \in \A }{\sup} \, \mathbb{E} \big( {U}( T,F^{\pi,\rho}_{T}) +\int_0^{T} V(t,\rho_t)dt   \big). $\\
 [1mm]
 Under regularity assumptions, this criterion is  given by  the ``value function" $\mathcal{U}(T,z)$  given the wealth   $F_T=z$  at time $T$ 
\begin{eqnarray}\label{pbopticlassic}
\mathcal{U}(T,z)=\underset{(\pi,\rho) \in \A }{``\sup"} \, \mathbb{E}  \Big( u(T_H, F^{\pi,\rho}_{T_H}(T,z)) + \int_T^{T_H}
V(s,{\rho_s}) ds  | F_T=z  \Big), \>a.s..
\end{eqnarray} 
This time-consistency translates into a martingale property
of the  preference  process\\ $\mathcal{U}(t,F_t^{(\pi,\rho)}) +\int_0^t V(s,\rho_s) ds$ along  the optimal  strategy. This property, known as  the dynamic programming principle,  is the main tool in the theory of stochastic control, see Davis \cite{DavisLN} or  El Karoui \cite{EKStFlour}.   In this backward setting,  $\mathcal{U}(T_H,.)=u(T_H, .)$ is given, and the unknown is the optimal strategy $(F^*,\rho^*)$ as well as $\mathcal{U}(t,.)$, also called ''indirect''  utility,  possibly stochastic. 
Nevertheless, $\mathcal U$  is difficult to compute (even if  $u(T_H,.)$  is  given  as a   simple deterministic  function), and  it is  even  not  trivial to prove that  $\mathcal U$ defined by \eqref{pbopticlassic} is indeed concave.\\
In the  forward setting,  there is no intrinsic time-horizon $T_H$ and this is the initial utility $\mathcal{U}(0,.)$ which is given. 
 This means that forward utilities differ from backward utilities  by their boundary conditions, both  satisfying a dynamic programming principle, also called  consistency given the constraints set $\A$.
\paragraph{Consistency and optimal strategy}
The satisfaction provided by an admissible  strategy $(\pi,\rho) \in \A$  is measured  by the dynamic  criterion $U(t,F^{\pi,\rho}_t) + \int_0^t V(s,\rho_s) ds$.
that is assumed to satisfy a dynamic programming principle.

\begin{Definition}[Consistent dynamic  utility]\label{def:consistent system}
Let $( U,  V)$ be a dynamic utility  system with admissible strategies set  $\A$. The utility  system $( U,  V)$ is said to be {consistent}, if \\
\rmi For any admissible strategies   $(\pi,\rho) \in \A$,  the preference process 
$(U(t,F^{\pi,\rho}_t) + \int_0^t V(s,\rho_s) ds)$    is a non-negative supermartingale.\\
\rmii There exists an {\rm optimal } strategy  $(\pi^*,\rho^*) \in \A$,  binding the constraints, in the sense that
the optimal  preference process   $(U(t,F^{\pi^*,\rho^*}_t) + \int_0^t V(s,\rho^*_s) ds)$    is a martingale.
 \end{Definition}
 \noindent  Under regularity assumptions, the value function  $(\cal U(t,z), V(t,\rho))$ of the classical optimization problem  is an example of   
 consistent utility system,  defined from its terminal condition $\cal U(T_H,z)=u(z)$ (see  \cite{MrNek07} and \cite{el2022ramsey} for a general discussion between the forward and the backward viewpoints of  utility functions).

\subsection{ Semimartingale  dynamic  utility}\label{SDU}
In order to study the preference process $(U(t,F^{\pi,\rho}_t) + \int_0^t V(s,\rho_s) ds)$, we  assume  in the following that the  buffer fund dynamic utility $U$ defined in \ref{defU} is  an Itô random field:
\begin{eqnarray}\label{rf decomposition}
dU(t,z)=\beta(t,z)dt +\gamma(t,z).dW_t, \quad  z\geq \mathfrak{K}_t,
\end{eqnarray}
with  $\beta(t,z)$ the drift random field   and $\gamma(t,z)$  the  multivariate  diffusion random field. 
Since the domain of  the buffer fund utility $U$ is time varying, its dynamics is defined more precisely by introducing the shifted utility $\bar U$ with fixed domain $\R^+\times \R^+$: 
\begin{equation}
\bar U(t,z) := U(t, z+\mathfrak{K}_t), \quad \forall (t,z) \in \R^+\times \R^+, 
\end{equation}
and with  local characteristics  denoted by $(\bar \beta,\bar \gamma)$. 
Obviously $U$ is a dynamic utility on $\mathcal{D}_U$ if and only if $\bar U$ is a dynamic utility on $\R^+ \times \R^+$.  These  semimartingale dynamic utilities have been studied in details in \cite{MrNek1}. An important  part is the  connection between the regularity of the dynamic  utility  $\bar U$ and that of its local characteristics $(\bar \beta,\bar \gamma)$. If $ \bar U$ is of class $\C^{2,\delta}$ then its characteristics $\bar \beta$ and $\bar \gamma$ are of class $\C^{2,\eps}$ for all $0<\eps<\delta$. Conversely if $\bar \beta$ and $\bar \gamma$ are in the class $\C^{2,\delta}$ then $U$ is in $\C^{2,\eps}$ for all $0<\eps<\delta$ (\cite{Kunita:01}, \cite{MrNek1}). Another part relies on determining  conditions on $(\bar \beta,\bar \gamma)$  to ensure  that $\bar U$ is a dynamic utility.  Indeed, in  the absence of general comparison results for stochastic integrals, it is not straightforward to obtain conditions on the local characteristics $(\bar \beta,\bar \gamma)$  such that the process $\bar U(t,z)=\bar U(t,0) + \int_0^t \bar \beta(t,s)ds +\bar \gamma(s,z).dW_s$  is increasing and concave. 
 The sufficient assumptions recalled below are useful. 
\begin{Assumption} 
\label{HypBarU}
Let $\bar U(t,z) = U(t,z+\mathfrak{K}_t)$ the shifted buffer fund utility.   We assume that there exists random  bounds   $(B^1_t)$ and {$(\zeta_t)$} such that a.s. $\int_{0}^TB_{t}^1dt<\infty$ and {  $\int_{0}^T\zeta_{t}^2dt<\infty$} for any $T$, and  such that for any $z > 0$: 
\begin{numcases}{}
 |\bar \beta_{z}(t,z) | \leq B^1_t \, \bar U_z(t,z), \quad   |\bar \beta_{zz}(t,z) | \leq B^1_t \, | \bar U_{zz}(t, z)   |  \label{boundenzero0} \\
 \|\bar \gamma_{z}(t,z)\|   \leq \zeta_t \, \bar U_{z}(t,z), \quad  \|\bar \gamma_{zz}(t,z)  \|  \leq \zeta_t \,| \bar U_{zz}(t,z )|\label{boundenzero}
 \end{numcases}
\end{Assumption}
\noindent Under  Assumption \ref{HypBarU}, $\bar U$ is  a well-defined dynamic utility by  Corollary 1.3 in \cite{MrNek1}. 

\paragraph{It\^{o}-Ventzel's formula }  The link between the local characteristics $(\beta,\gamma)$ of $U$ and $(\bar \beta, \bar \gamma)$ of $\bar U$ are deduced from the  It\^o-Ventzel formula,  which is a generalization of the  It\^o formula in the case where the function is  itself a random field. \\
The  It\^o-Ventzel formula  gives the decomposition of a compound random field $U(t,X_t)$ for $U(t,z)=u(0,z)+\int_0^t \beta(s,z)ds+\int_0^t \gamma(s,z).dW_s$ regular enough (of class $\C^{2,\delta}$, with $\delta\in ]0,1[$) and  any It\^o  semimartingale $X$. This decomposition is  the sum of three terms: the first one is the "differential in $t$" of $U$, the second one is the classic It\^o's formula (without differentiation in time) and the third one is the infinitesimal covariation between the martingale part of $U_z$ and the martingale part of $X$, all these terms being taken in $X_t$. 
\begin{eqnarray}\label{ItoVentzel}
dU(t,X_t) &=& \big(\beta(t,X_t)\,dt+\gamma(t,X_t).dW_t\big)\\
&+&\big(U_z(t,X_t) dX_t+\demi U_{zz}(t,X_t)d<X,X>_t\big)+\big(<\gamma_z(t,X_t) \cdot dW_t, dX_t>\big).\nonumber
\end{eqnarray}
Applying the Itô-Ventzell formula to $\bar{U}(t, z- \mathfrak{K}_t)$ yields the following result. 
\begin{Proposition} \label{eq:Lipschitzcarac} 
Recall that the bound $\mathfrak{K}_t $ is an Itô process with dynamics $d\mathfrak{K}_t =  \mu_t^\mathfrak{K} dt + \delta_t^\mathfrak{K}\cdot  dW_t$. Under Assumption  \ref{HypBarU}, $U$ is a dynamic  utility on the domain $\mathcal{D}_U =  \{ (\omega,t,z), \; z \geq \mathfrak{K}_t \}$, with  local characteristics
 \begin{numcases}{}
 \beta(t,z)  =     \bar \beta(t,z- \mathfrak{K}_t) -  U_z(t,z)\mu^{\mathfrak{K}}_t  - \frac{1}{2} U_{zz}(t,z)\|\delta^{\mathfrak{K}}_t\|^2 -  \gamma_z(t,z)\cdot \delta^{\mathfrak{K}}_t\label{barBeta}\\
 \gamma(t,z) =\bar \gamma(t,z-\mathfrak{K}_t)   - U_z(t,z)\delta_t^{\mathfrak{K}}, \label{barGamma} \quad \forall z \geq \mathfrak{K}_t. \label{liengammaet bar}
 \end{numcases}
\end{Proposition}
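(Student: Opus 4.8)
The plan is to view $U(t,z)=\bar U(t,z-\mathfrak{K}_t)$, for each fixed level $z$, as the composition of the random field $\bar U$ with the Itô semimartingale $X_t:=z-\mathfrak{K}_t$, and then to read off $(\beta,\gamma)$ by identifying the $dt$- and $dW_t$-terms produced by the Itô-Ventzell formula \eqref{ItoVentzel}. Since $U$ is of class $\mathcal{C}^{3,\delta}$ (Definition \ref{defU}) and Assumption \ref{HypBarU} makes $\bar U$ a bona fide dynamic utility, $\bar U$ is of class $\mathcal{C}^{2,\delta}$ with locally integrable characteristics, so \eqref{ItoVentzel} applies to $\bar U$ along $X$.

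First I would record the ingredients entering \eqref{ItoVentzel}. From $d\mathfrak{K}_t=\mu^{\mathfrak{K}}_t\,dt+\delta^{\mathfrak{K}}_t\cdot dW_t$ we get $dX_t=-\mu^{\mathfrak{K}}_t\,dt-\delta^{\mathfrak{K}}_t\cdot dW_t$ and $d\langle X,X\rangle_t=\|\delta^{\mathfrak{K}}_t\|^2\,dt$. The chain rule in the deterministic spatial variable gives the pointwise identities $\bar U_z(t,z-\mathfrak{K}_t)=U_z(t,z)$ and $\bar U_{zz}(t,z-\mathfrak{K}_t)=U_{zz}(t,z)$, so that every spatial derivative in \eqref{ItoVentzel} can be rewritten directly in terms of $U$ evaluated at $z$.

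Applying \eqref{ItoVentzel} to $\bar U$ along $X$, the martingale part collects to $\big(\bar\gamma(t,z-\mathfrak{K}_t)-U_z(t,z)\delta^{\mathfrak{K}}_t\big)\cdot dW_t$, which is exactly the claimed $\gamma$ in \eqref{barGamma}. The drift collects four contributions: $\bar\beta(t,z-\mathfrak{K}_t)$, the term $-U_z(t,z)\mu^{\mathfrak{K}}_t$ from $\bar U_z\,dX$, the second-order term $+\tfrac12 U_{zz}(t,z)\|\delta^{\mathfrak{K}}_t\|^2$, and the Itô-Ventzell covariation $-\bar\gamma_z(t,z-\mathfrak{K}_t)\cdot\delta^{\mathfrak{K}}_t$. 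To bring this into the stated form \eqref{barBeta}, I would differentiate the $\gamma$-relation \eqref{barGamma} in $z$, obtaining $\bar\gamma_z(t,z-\mathfrak{K}_t)=\gamma_z(t,z)+U_{zz}(t,z)\delta^{\mathfrak{K}}_t$; substituting this, the covariation term splits as $-\gamma_z(t,z)\cdot\delta^{\mathfrak{K}}_t-U_{zz}(t,z)\|\delta^{\mathfrak{K}}_t\|^2$, and the latter piece combines with the $+\tfrac12 U_{zz}(t,z)\|\delta^{\mathfrak{K}}_t\|^2$ already present to leave $-\tfrac12 U_{zz}(t,z)\|\delta^{\mathfrak{K}}_t\|^2$, reproducing \eqref{barBeta}.

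The only genuinely non-routine point is justifying the applicability of Itô-Ventzel and the legitimacy of differentiating \eqref{barGamma}, rather than the algebra, which is a direct term-by-term identification. For \eqref{ItoVentzel} one needs $\mathcal{C}^{2,\delta}$ regularity and locally integrable characteristics of $\bar U$; this is precisely where Assumption \ref{HypBarU} enters, guaranteeing via Corollary 1.3 of \cite{MrNek1} and the regularity correspondence between a random field and its local characteristics that $(\bar\beta,\bar\gamma)$ are well defined, regular and integrable. The extra degree of smoothness ($\mathcal{C}^{3,\delta}$ rather than $\mathcal{C}^{2,\delta}$ in Definition \ref{defU}) is exactly what makes $\gamma$ and $\bar\gamma$ of class $\mathcal{C}^2$ in $z$, so that the passage from $\bar\gamma_z$ to $\gamma_z$ is valid; I would flag this step to be stated carefully, the remainder being a straightforward matching of coefficients.
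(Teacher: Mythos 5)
Your proposal is correct and follows essentially the same route as the paper, which proves the proposition simply by applying the It\^o--Ventzel formula \eqref{ItoVentzel} to $\bar U(t,z-\mathfrak{K}_t)$ along $X_t=z-\mathfrak{K}_t$ and identifying the $dt$- and $dW_t$-coefficients. Your explicit substitution $\bar\gamma_z(t,z-\mathfrak{K}_t)=\gamma_z(t,z)+U_{zz}(t,z)\delta^{\mathfrak{K}}_t$, which turns the raw drift term $+\tfrac12 U_{zz}\|\delta^{\mathfrak{K}}_t\|^2-\bar\gamma_z\cdot\delta^{\mathfrak{K}}_t$ into the stated $-\tfrac12 U_{zz}\|\delta^{\mathfrak{K}}_t\|^2-\gamma_z\cdot\delta^{\mathfrak{K}}_t$, together with your justification of this differentiation via the $\mathcal{C}^{3,\delta}$ regularity of $U$, is exactly the step the paper leaves implicit.
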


\section{Optimal PAYG pension policies}
\label{SectionMainResults}

Let us recall that a pensioner of age $a$ at time $t$ receives  pension $p(t,a)= p^{min}_t(a)\rho_t\geq p_t^{min}(a)$. The buffer fund in which the social planner can borrow/invest has the dynamics \eqref{EqDynK} given by: 
\begin{equation*}
d F^{\pi, \rho}_t=  F^{\pi, \rho}_t  r_t dt + (C_t-\rho_t P^{min}_t) dt +  \pi_t  \cdot  ( dW_t + \eta_t dt),
\end{equation*}
with $(C_t)$ the contribution process,  $(P_t)= (\rho_t P^{\min}_t)$ the  pension process introduced in \eqref{Pensionprocess}, and $(\pi_t)$ the investment strategy in the incomplete market.  The aggregated utility of pensioners  is given by   $V(t,\rho)= \int^\infty_{a_r}  v(t, a,  \rho) n(t,a) da$ (see \eqref{defaggregateV} and subsequent examples).\\
The first aim is  to  characterize the buffer fund utility $U$ of the social planner, such that the preference criteria $(U,V)$ is consistent, 
and  then to determine
 the admissible  strategy $(\pi,\rho) \in \A$   optimizing   the dynamic  criterion $U(t,F^{\pi,\rho}_t) + \int_0^t V(s,\rho_s) ds$.

\subsection{Consistency SPDE }
  Itô-Ventzel's formula allows us 
 to transform  the   supermartingale property implied by the consistency condition   into conditions
on the  differential characteristics of the utility process $U$. For standard deterministic  utility functions, the infinitesimal counterpart of the dynamic programming principle is  a nonlinear  Partial Differential Equation (PDE), called dynamic programming equation  or Hamilton-Jacobi-Bellman  (HJB) equation.  In the framework  of  dynamic utility, the  consistency characterization  is given  in terms of  an HJB  Stochastic Partial Differential Equation (SPDE), as detailed in \cite{zar-07} and \cite{MrNek1,MrNek07}.   The presence of pensions to be paid  impacts this SPDE in a non-linear way, the non-linear factor involving the utility    of pensioners $V$.
Note that the utility $U$ and $V$ are not of the same nature : the consistency is conveyed by $U$, which requires then stronger regularity conditions on $U$ than  on $V$. This HJB-SPDE provides a constraint on the drift $\beta$ of $U$,  as  explained  below.
\paragraph{Candidate to be the optimal strategy}
Applying It\^{o}-Ventzel's formula to the preference process $Z^{\pi,\rho}_{t}:=\int_0^t    V(s, \rho_s)   ds + U(t,F^{\pi,\rho}_t)$  with yields 
 \begin{eqnarray*}
\hspace{-3cm} dZ^{\pi,\rho}_{t}&=&\big(\beta(t,F^{\pi,\rho}_{t})+U_{z}(t,F^{\pi,\rho}_{t})(F^{\pi,\rho}_t   r_t+C_t)\big)dt \\
&+ &\big(\gamma(t,F^{\pi,\rho}_{t})+U_{z}(t,F^{\pi,\rho}_{t})\pi_t   \big) \cdot dW_{t}\\
&+&\big(\mathcal{P}(t,F^{\pi,\rho}_{t},\rho_{t})+\mathcal{Q}(t,F^{\pi,\rho}_{t},\pi_{t})\big)dt. 
\end{eqnarray*}\vspace{-8mm}
with \vspace{-2mm}
\begin{align}
\label{P}& \mathcal{P}(t,z,\rho):= V(t,\rho)-U_{z}(t,z)  P^{min}_t \rho\\
\label{Q} & \mathcal{Q}(t,z,\pi):=\demi U_{zz}(t,z)|| \pi_t||^{2}  +\pi_{t}\cdot (\gamma_{z}(t,z)+U_{z}(t,z) \eta_t). 
\end{align}
A natural  candidate for optimal policy $(\pi^*, \rho^*)$ are processes which  maximize  the drift of the preference process  $Z^{\pi,\rho}$.  Thus, 
$\pi^*$  should maximize $\mathcal{Q}(t,z,\cdot)$  and $\rho^*$ should maximize $\mathcal{P}(t,z,\cdot)$ on $[ 1, +\infty]$,  leading to 
\begin{equation}\label{candidate}
\pi^{*}_t(F_t^*)=-\frac{\gamma^{\mathcal R}_{z}(t,F_t^*)+U_{z}(t,F_t^*) \eta_t}{U_{zz}(t,F_t^*)} \quad \mbox { and } \quad 
\rho^{*}_t(F_t^*)=V_{\rho}^{-1}\big(t, P^{min}_t U_{z}(t,F_t^*)\big) \vee 1.
\end{equation}
Note that $\mathcal{Q}(t,z,\pi^*_t) = - \frac{1}{2}U_{zz}(t,z)\|\pi^*_t\|^2$.\\
With a slight abuse of notation, we use  interchangeably $\pi^*_t$ (resp. $\rho^*_t$)  or $\pi^*_t ( F^*_t)$  (resp. $ \rho^*_t ( F^*_t)$).\\
To alleviate the notations, we introduce the  optimal pension process   without minimum guarantee, denoted $\rho^f$ (free constraints). 
\begin{Definition}
The maximizer of    the operator $\mathcal{P}(t,z,\rho):=V(t, \rho)-U_{z}(t,z) P^{min}_t  \rho$ for $\rho \in \mathbb R^+$  is denoted 
\begin{equation}
\label{pf}
\rho^f_t (z) = V_{\rho}^{-1}\big(t, P^{min}_t U_{z}(t,z)\big).
\end{equation}
Remark that $\mathcal{P}(t,z,\rho^f_t (z)) = \tilde V (t, P^{min}_tU_{z}(t,z))$.
\end{Definition}
\begin{Remark}
\label{RemInadaV}
Note  that if $U$ satisfies the Inada condition at  $\mathfrak{K}$, then  $v$ also satisfy the Inada condition at  $\underline \rho$,in order to ensure that  the optimal pension is   well defined (namely the quantity $V_{\rho}^{-1}\big(t, P^{min}_t U_{z}(t,z)\big) $). 
Remark also that Inada conditions for $U$ and $v$ at  $+\infty$ can be relaxed into the following condition $\lim\limits_{ \rho \rightarrow +\infty} V_\rho(t,\rho) \leq P^{min}_t \lim\limits_{z \rightarrow +\infty} U_z(t,z)$.
\end{Remark}
\paragraph{Consistency condition on the drift $\beta$ of $U$} 
If the candidate $(\pi^*,\rho^*)$ is  indeed the optimal strategy, then to satisfy the time consistency, the drift of the preference process  $Z^{\pi,\rho}$ should be nonpositive for any admissible strategy $(\pi, \rho)$ and equal to zero for the optimal stragegy $(\pi^*, \rho^*)$. This leads to the  following sufficient condition on  the drift of $U$ \vspace{-2mm}
\begin{eqnarray}
\hspace{-0.8cm} \beta(t,z) &= & -U_{z}(t,z)   (zr_t+C_t)
 -  \mathcal{Q}(t,z,\pi^*)-\mathcal{P}(t,z,\rho^{*}_{t}(z))\nonumber\\
&=  &  -U_{z}(t,z) \big(zr_t+C_t- P^{min}_t  \rho^*\big)
+\demi U_{zz}(t,z)\| \pi^*_t (z)\|^{2}-V(t, \rho^*_{t}(z))\label{HJB-Beta}.
\end{eqnarray}

\subsection{Main results}

We gather  here the main results the paper, proofs and examples being postponed in the next subsections. The  first below result shows that under the consistency HJB condition \eqref{HJB-Beta},   the bound $\mathfrak{K}$ shifting the utility $U$  is necessarily a buffer fund. This is an interesting new result.

 \begin{Theorem}\label{SusCond}
Let $U$ be the buffer fund utility introduced in \ref{defU} and verifying Assumption \ref{HypBarU}, and $V$  the aggregated pensioners' utility, verifying Assumption \ref{hypVbis}. Assume that  the drift $\beta$ of $U$ satisfies the HJB constraint
$$ \beta(t,z) =  -U_{z}(t,z) \big(zr_t+C_t- P^{min}_t \rho^*_t (z)\big)
+\demi U_{zz}(t,z)\| \pi^*_t (z)\|^{2}-V(t, \rho^*_t (z)). \quad \quad \eqref{HJB-Beta} $$
 Then the sustainability bound $\mathfrak{K}$ is necessarily   a  buffer fund  receiving the contribution $C$ and paying the minimal pension amount  $ P^{min}$, that is:
 \begin{equation}\label{BFdynamic}
 d\mathfrak{K}_t =  (\mathfrak{K}_t r_t + C_t -   P^{min}_t) dt + \delta_t^\mathfrak{K}  \cdot (dW_t + \eta_t dt),~\delta^\mathfrak{K}\in \sigR.
 \end{equation}
  \end{Theorem}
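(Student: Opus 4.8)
The plan is to confront the two available expressions for the drift $\beta$ of $U$ and to read off the dynamics of $\mathfrak{K}$ from the behaviour of the resulting identity at the lower boundary $z=\mathfrak{K}_t$. On one side, Proposition \ref{eq:Lipschitzcarac} expresses $\beta$ through the characteristics $(\bar\beta,\bar\gamma)$ of the shifted utility $\bar U$ together with $(\mu^{\mathfrak{K}},\delta^{\mathfrak{K}})$; on the other side, the HJB constraint \eqref{HJB-Beta} gives $\beta$ in terms of $\rho^*$ and $\pi^*$. First I would eliminate $\beta$ between these two relations. Differentiating \eqref{barGamma} gives $\gamma_z(t,z)=\bar\gamma_z(t,z-\mathfrak{K}_t)-U_{zz}(t,z)\,\delta^{\mathfrak{K}}_t$; inserting this into \eqref{barBeta} turns the second-order contribution $-\tfrac12 U_{zz}\|\delta^{\mathfrak{K}}_t\|^2-\gamma_z\cdot\delta^{\mathfrak{K}}_t$ into $+\tfrac12 U_{zz}\|\delta^{\mathfrak{K}}_t\|^2-\bar\gamma_z\cdot\delta^{\mathfrak{K}}_t$. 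Using $\mathcal{Q}(t,z,\pi^*)=-\tfrac12 U_{zz}\|\pi^*\|^2$ with $\pi^*=-(\gamma^{\mathcal R}_z+U_z\eta_t)/U_{zz}$, and splitting $\delta^{\mathfrak{K}}_t=(\delta^{\mathfrak{K}}_t)^{\mathcal R}+(\delta^{\mathfrak{K}}_t)^\perp$ (so that the cross term $U_z\,\eta_t\cdot\delta^{\mathfrak{K}}_t$ joins the first-order bracket, recalling $\eta_t\in\mathcal R_t$), I would arrive at a single identity, valid for all $z>\mathfrak{K}_t$, of the form
\begin{equation*}
\tfrac12 U_{zz}(t,z)\,\|(\delta^{\mathfrak{K}}_t)^\perp\|^2 + U_z(t,z)\,\big[\mathfrak{K}_t r_t + C_t - P^{min}_t\rho^*_t(z) - \mu^{\mathfrak{K}}_t + \eta_t\cdot\delta^{\mathfrak{K}}_t\big] + (z-\mathfrak{K}_t)r_t\,U_z(t,z) + R(t,z)=0,
\end{equation*}
where $R$ gathers $\bar\beta(t,z-\mathfrak{K}_t)$, the term $-\bar\gamma^\perp_z\cdot(\delta^{\mathfrak{K}}_t)^\perp$, $V(t,\rho^*_t(z))$ and the quadratic remainder $-\|\bar\gamma^{\mathcal R}_z+U_z\eta_t\|^2/(2U_{zz})$.

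The heart of the argument is the asymptotic analysis as $z\downarrow\mathfrak{K}_t$. The Inada condition at the boundary gives $U_z(t,z)\to+\infty$, while Remark \ref{RemarkInada} gives $R_A(U)=-U_{zz}/U_z\to+\infty$, so that $|U_{zz}|$ dominates $U_z$ and $U_z/U_{zz}\to0$. By Remark \ref{RemInadaV} (Inada for $V$ at $\underline\rho_t\le1$), as $U_z\to\infty$ one has $\rho^f_t(z)=V_\rho^{-1}(t,P^{min}_tU_z(t,z))\to\underline\rho_t$, hence $\rho^*_t(z)=\rho^f_t(z)\vee1\to1$ and $V(t,\rho^*_t(z))\to V(t,1)$ stays bounded. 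Since $\bar U$ is nonnegative and increasing, $\bar U_z(t,\cdot)$ is integrable near $0$, so Assumption \ref{HypBarU} ($|\bar\beta_z|\le B^1_t\bar U_z$) forces $\bar\beta(t,\cdot)$ to have a finite boundary limit, while $\|\bar\gamma_z\|\le\zeta_t\bar U_z=O(U_z)$. Together with $\|\bar\gamma^{\mathcal R}_z+U_z\eta_t\|^2/U_{zz}=O(U_z^2/U_{zz})=o(U_z)$, this shows $R(t,z)=O(U_z(t,z))$ in general, and $R(t,z)=o(U_z(t,z))$ once the $\bar\gamma^\perp_z\cdot(\delta^{\mathfrak{K}}_t)^\perp$ contribution is known to vanish.

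I would then extract the two conclusions by two successive normalisations. Dividing the identity by $U_{zz}(t,z)$ and letting $z\downarrow\mathfrak{K}_t$, the bracketed first-order term (multiplied by $U_z/U_{zz}\to0$), the term $(z-\mathfrak{K}_t)r_t\,U_z/U_{zz}$, and $R/U_{zz}=O(U_z/U_{zz})\to0$ all vanish, leaving $\tfrac12\|(\delta^{\mathfrak{K}}_t)^\perp\|^2=0$; hence $(\delta^{\mathfrak{K}}_t)^\perp=0$, i.e. $\delta^{\mathfrak{K}}_t\in\mathcal R_t$. The second-order term then disappears from the identity and, the $\bar\gamma^\perp_z$-term having vanished too, $R=o(U_z)$. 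Dividing what remains by $U_z(t,z)$ and letting $z\downarrow\mathfrak{K}_t$ (so that $\rho^*_t(z)\to1$ and $(z-\mathfrak{K}_t)r_t\to0$) yields $\mathfrak{K}_t r_t+C_t-P^{min}_t-\mu^{\mathfrak{K}}_t+\eta_t\cdot\delta^{\mathfrak{K}}_t=0$, that is $\mu^{\mathfrak{K}}_t=\mathfrak{K}_t r_t+C_t-P^{min}_t+\eta_t\cdot\delta^{\mathfrak{K}}_t$. Substituting both facts into $d\mathfrak{K}_t=\mu^{\mathfrak{K}}_t\,dt+\delta^{\mathfrak{K}}_t\cdot dW_t$ reproduces exactly \eqref{BFdynamic}.

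The step I expect to be the main obstacle is the rigorous justification of the two boundary limits: establishing that $\bar\beta(t,z-\mathfrak{K}_t)$ and $\bar\gamma_z(t,z-\mathfrak{K}_t)$ do not grow faster than $U_z$ as $z\downarrow\mathfrak{K}_t$, and that the limits can be taken so as to preserve the predictability and measurability of the identified coefficients. The saturation $\rho^*_t(z)\to1$ of the adequacy constraint at the boundary is the other delicate point, since it rests on the precise Inada calibration of $V$ at $\underline\rho_t$ recorded in Remark \ref{RemInadaV}.
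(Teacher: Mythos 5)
Your proposal is correct and follows essentially the same route as the paper: eliminate $\beta$ between Proposition \ref{eq:Lipschitzcarac} and the HJB constraint \eqref{HJB-Beta}, then let $z\downarrow\mathfrak{K}_t$ and use the Inada conditions ($U_z\to\infty$, $-U_{zz}/U_z\to\infty$, $\rho^*_t(z)\to 1$) together with the bounds of Assumption \ref{HypBarU} to force first $\delta^{\mathfrak{K},\perp}_t=0$ and then the drift identity $\mu^{\mathfrak{K}}_t=\mathfrak{K}_tr_t+C_t-P^{min}_t+\eta_t\cdot\delta^{\mathfrak{K}}_t$. Your two successive normalisations (dividing by $U_{zz}$, then by $U_z$) are just a cleaner bookkeeping of the paper's dominant-balance argument, which factors $\bar U_z$ out of the right-hand side and argues the bracket must vanish; your observation that $|\bar\beta_z|\le B^1_t\bar U_z$ plus integrability of $\bar U_z$ near $0$ yields the finite boundary limit of $\bar\beta$ is in fact slightly more careful than the paper's appeal to continuity.
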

  
 \noindent The proof is postponed to  Section \ref{sec:proofTh3.2}.

\begin{Remark}\label{Rk:sureplication} {\rm
Theorem \ref{SusCond} states that if the  utility system   $(U,V)$ is consistent, then the shift $\mathfrak{K}_t$ of the utility $U$ is  necessarily a  buffer fund with minimal pensions.   Nevertheless, in the original problem formulation and Definition \ref{defU} of $U$,  $\mathfrak{K}_t$ is not assumed to be  a buffer fund itself: for instance if $\mathfrak{K}$ is an index following the GDP, there is no reason why it will  follow dynamics \eqref{BFdynamic}. Therefore, to satisfy the consistency constraint, the dynamic utility should be shifted not with  the sustainability constraint $\mathfrak{K}_t$, itself but  with  a buffer fund process $(X_t)$ that super-replicates (in a pathwise way) $(\mathfrak{K}_t)$ that is :  $X_t \geq \mathfrak{K}_t$,   for all $t$, $\mathbb P$ a.s. 
 This means that the sustainability constraint is transformed into a stronger one: $F_t \geq X_t  \;(\geq \mathfrak{K}_t) $,   for all $t$, $\mathbb P$ a.s..  
The problem is equivalent to searching for a self-financing portfolio (without contributions and pensions)  $X'$   super-replicating (pathwise)  the process $B_t := \mathfrak{K}_t + \int_0^t (P_s^{min}- C_s) ds $.
For example,  it may be  relevant to choose the ``minimal"  super-replicating self-financing portfolio $(X'_t)$ (if it exists) in the following sense:
{\footnotesize $$X^*_0:=\inf \{X_0 \geq \mathfrak{K}_0 \,s.t. \; \exists \pi' \in \sigR \ \mbox{ satisfying }  X'_t:=X_0+ \int_0^t r_{s}X_{s}' ds + \pi_s' \cdot (dW_{s}+\eta_{s}ds)  \geq  B_t , 
 \; dt \otimes d\mathbb{P} a.s.  \}$$}
\noindent The existence of a  super-replicating self-financing portfolio is not guaranteed, especially in our context of  incomplete market, in which demographic risk can not be  completely hedged by financial assets.  Applying Theorem 5.12 of Karatzas and Kou \cite{karatzas1998hedging},  a  sufficient existence condition on $[0,T]$ is
 \begin{equation}\label{condSR}
 \sup_{\nu \in \mathcal{R}^\perp} \sup_{\tau \in \Tau_{[0,T]}} \mathbb E \left(           Y_\tau^\nu  \big( \mathfrak{K}_\tau + \int_0^\tau (P_s^{min}- C_s) ds  \big)^+  \right)  < \infty, \quad \forall\,  T \geq 0,
  \end{equation}
 where $  \Tau_{[0,T]}$ is the class of stopping times $\tau$ with values in the interval  $[0, T ]$ and $Y^\nu$ the state price density process \eqref{Ydyn}. Note that this supremum corresponds to $X^*_0$ which is  the super-replicating price of $(B_t)$. \\
In the backward framework, El Karoui and Jeanblanc \cite{el1998optimization},  or He and Pagès \cite{he1993labor} consider a complete market, which ensures the existence of a super-replicating portfolio.  In incomplet market, an analoguous assumption as \eqref{condSR}  is needed  in Mostovyi and Sirbu \cite{mostovyi2020optimal}  (Assumption 2.5), to  ensures the existence of a super-replicating portfolio (see Lemma 3.2 \cite{mostovyi2020optimal} ). 
}
\end{Remark}
\noindent  The following theorem shows that the policy  given by \eqref{candidate} is  indeed the optimal strategy, and that the corresponding buffer fund satisfies the sustainability constraint  \eqref{ConstraintK2}. 
 \begin{Theorem}\label{ThOpt}
Let $U$ be the buffer fund utility verifying Assumptions \ref{HypBarU},  and $V$ be the aggregated pensioners' utility, verifying Assumption \ref{hypVbis}. Assume $F_0 > \mathfrak{K}_0$ and  that the drift $\beta$ of $U$ satisfies the HJB constraint \eqref{HJB-Beta}:
\begin{equation*}
 \beta(t,z) =  -U_{z}(t,z) \big(zr_t+C_t- P^{min}_t \rho_t^*(z)\big)
+\demi U_{zz}(t,z)\| \pi^*_t (z)\|^{2} - V(t, \rho_t^*(z)),
\end{equation*}
\begin{numcases}{ with }
\pi^{*}_t (z)=-\frac{\gamma^{\mathcal R}_{z}(t,z)+U_{z}(t,z) \eta_t}{U_{zz}(t,z)},\label{pi*}\\
\rho^{*}_t (z)= V_{\rho}^{-1}\big(t, P^{min}_t U_{z}(t,z)\big) \vee 1. \label{p*}
\end{numcases}
Then the portfolio/pension plan $(\pi^*,\rho^*)$ is the optimal strategy. In particular, 
 the buffer fund $F^{*}$ following the investment strategy $\pi^{*}_t =\pi^*_t (F^*_t)$ and paying the pension  amount $\rho^{*}_t =\rho^*_t (F^*_t)$ is strictly greater than $\mathfrak{K}_{t}$ and satisfies the dynamics
 \begin{numcases}{}
dF^{*}_{t}=\mu^{*}_t (F^{*}_{t})dt+\pi^{*}_t (F^{*}_{t}).dW_{t},\label{dynF*} \\
\mu^{*}_t (z):=zr_{t}+C_{t}- P^{min}_t \rho^{*}_t (z)+\pi^{*}_t (z)\eta_{t}.\nonumber
\end{numcases}
\end{Theorem}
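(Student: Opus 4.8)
The plan is to start from the It\^o--Ventzel decomposition of the preference process $Z^{\pi,\rho}_t=\int_0^t V(s,\rho_s)\,ds+U(t,F^{\pi,\rho}_t)$ already recorded before \eqref{candidate}, whose drift is $\beta(t,F)+U_z(t,F)(Fr_t+C_t)+\mathcal P(t,F,\rho)+\mathcal Q(t,F,\pi)$. Substituting the HJB constraint \eqref{HJB-Beta} in the equivalent form $\beta(t,z)=-U_z(t,z)(zr_t+C_t)-\mathcal Q(t,z,\pi^*_t(z))-\mathcal P(t,z,\rho^*_t(z))$, this drift collapses to $[\mathcal P(t,F,\rho)-\mathcal P(t,F,\rho^*_t(F))]+[\mathcal Q(t,F,\pi)-\mathcal Q(t,F,\pi^*_t(F))]$. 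Since $\pi\mapsto\mathcal Q(t,z,\pi)$ is concave (because $U_{zz}<0$) with maximiser $\pi^*_t(z)$ over $\mathcal R_t$, and $\rho\mapsto\mathcal P(t,z,\rho)$ is concave with maximiser $\rho^*_t(z)=\rho^f_t(z)\vee 1$ over $[1,\infty[$, each bracket is $\le 0$ for every admissible $(\pi,\rho)$ and vanishes for $(\pi^*,\rho^*)$. Hence $Z^{\pi,\rho}$ is a local supermartingale for every admissible strategy and a local martingale for the candidate; as $U\ge 0$ and $V\ge 0$, nonnegativity upgrades the former to a genuine supermartingale, which is part (i) of the consistency property.

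The heart of the proof is to show that the candidate fund stays strictly inside its domain, i.e. $F^*_t>\mathfrak K_t$. First I would define $F^*$ as the solution of \eqref{dynF*} up to its exit time $\theta$ from $]\mathfrak K_t,\infty[$; local existence and uniqueness follow from the $\mathcal C^{3,\delta}$ regularity of $U$ (Definition \ref{defU}), which makes $\pi^*$ of \eqref{pi*} locally Lipschitz through $\gamma_z,U_z,U_{zz}$, together with Assumption \ref{hypVbis}, whose role is precisely to control the pension feedback $\rho^*$ of \eqref{p*} via the increments of $\tilde U_z$. On $[0,\theta[$ I would then apply the It\^o--Ventzel formula \eqref{ItoVentzel} a second time, now to the random field $U_z$ evaluated along $F^*$, to obtain the dynamics of the marginal process $Y^*_t:=U_z(t,F^*_t)$. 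Using $U_{zz}(t,F^*)\pi^*_t=-(\gamma^{\mathcal R}_z(t,F^*)+U_z(t,F^*)\eta_t)$ the diffusion part comes out as $(\gamma^{\perp}_z(t,F^*)-U_z(t,F^*)\eta_t)\cdot dW_t$, while differentiating the HJB identity in $z$ — the cross terms in $\partial_z\pi^*$ and $\partial_z\rho^*$ dropping out by the first-order/envelope conditions $\mathcal Q_\pi(t,z,\pi^*)=0$ and $\mathcal P_\rho(t,z,\rho^*)\partial_z\rho^*=0$ — makes the drift equal to $-U_z(t,F^*)r_t$. Thus $Y^*$ is exactly an admissible state price density of the form \eqref{Ydyn}, with $\nu_t=\gamma^{\perp}_z(t,F^*_t)/U_z(t,F^*_t)\in\mathcal R^\perp_t$.

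From here strict sustainability is immediate, and this is the place where the Inada condition of Definition \ref{DynamicUtility}(iv) (equivalently the explosion of the risk aversion at $\mathfrak K_t$, Remark \ref{RemarkInada}) enters: the state price density $Y^*$, being a discounted positive stochastic exponential, is finite a.s. on every $[0,T]$, whereas $U_z(t,z)\to+\infty$ as $z\downarrow\mathfrak K_t$. Consequently $F^*_t$ cannot approach $\mathfrak K_t$, so $\theta=+\infty$ and $F^*_t=U_z^{-1}(t,Y^*_t)=-\tilde U_y(t,Y^*_t)>\mathfrak K_t$ for all $t$, a.s.; this simultaneously gives global existence and the sustainability part of admissibility (adequacy $\rho^*\ge 1$ holds by construction, and the integrability requirements follow from the bounds in Assumptions \ref{HypBarU} and \ref{hypVbis}). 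Combining the supermartingale inequality for arbitrary admissible strategies with the martingale property along $(\pi^*,\rho^*)$ then yields optimality in the sense of the consistency definition.

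I expect the main obstacle to be making the strict inequality $F^*>\mathfrak K$ rigorous rather than formal: one must set up the fund SDE only up to the exit time, verify that $Y^*=U_z(t,F^*)$ genuinely solves the state-price-density equation on $[0,\theta[$ (the differentiated-HJB computation with the envelope simplifications is the delicate algebra), and then play the a.s. finiteness of $Y^*$ against the Inada blow-up to exclude $\theta<\infty$. A secondary technical point is upgrading the candidate's local martingale to a true martingale — required for the equality in the consistency criterion — which I would obtain either by a localisation/uniform-integrability argument using the bounds of Assumption \ref{HypBarU} or by invoking the corresponding result of \cite{MrNek07}.
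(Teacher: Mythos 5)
Your overall architecture matches the paper's: the drift-collapse argument for the super/martingale property, and, for sustainability, the computation showing that $Y^*_t:=U_z(t,F^*_t)$ solves the state-price-density equation \eqref{Ydyn} with $\nu_t=\gamma^{\perp}_z(t,F^*_t)/U_z(t,F^*_t)$ — including the envelope simplifications killing the $\partial_z\pi^*$ and $\partial_z\rho^*$ terms — is exactly the paper's Proposition \ref{MargU}, and the final clash between finiteness of $Y^*$ and the Inada blow-up at $\mathfrak K_t$ is the paper's concluding step.

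However, there is a genuine gap at the decisive point. You assert that $Y^*$, ``being a discounted positive stochastic exponential, is finite a.s.\ on every $[0,T]$.'' But $Y^*$ is only defined on the stochastic interval $[0,\theta[$, where it is trivially finite; what must be excluded is explosion as $t\uparrow\theta$, and ``being a stochastic exponential'' buys nothing there unless $\int_0^{\theta\wedge T}\|\nu_s\|^2\,ds<\infty$, which is precisely what is in doubt. Indeed, by \eqref{liengammaet bar} one has $\gamma^{\perp}_z(t,z)=\bar\gamma^{\perp}_z(t,z-\mathfrak K_t)-U_{zz}(t,z)\,\delta^{\mathfrak K,\perp}_t$, and since $U_{zz}/U_z\to-\infty$ at the boundary (Remark \ref{RemarkInada}), $\|\nu_t\|$ can itself explode as $F^*\to\mathfrak K$ whenever $\delta^{\mathfrak K,\perp}\neq 0$ — exactly in the regime where you seek the contradiction, making the argument circular. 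This is why the paper first proves Theorem \ref{SusCond}, whose conclusion $\delta^{\mathfrak K,\perp}=0$ (never invoked in your proposal, though it holds under the theorem's hypotheses) combines with the bounds \eqref{boundenzero} of Assumption \ref{HypBarU} to give $\|\gamma^{\perp}_z(t,z)\|\le\zeta_t\,U_z(t,z)$ and, via the second-derivative bound, Lipschitz continuity of $y\mapsto\gamma^{\perp}_z\bigl(t,U_z^{-1}(t,y)\bigr)$. The paper then considers the autonomous SDE $dY_t=-Y_tr_t\,dt+\bigl(\gamma^{\perp}_z(t,U_z^{-1}(t,Y_t))-Y_t\eta_t\bigr)\cdot dW_t$, which by Lipschitzness admits a unique strong non-explosive solution; since $U_z(t,F^*_t)$ solves it on $[0,\tau^F[$, uniqueness plus Inada forces $\tau^F=+\infty$. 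To repair your proof, invoke Theorem \ref{SusCond} and replace the finiteness assertion by this Lipschitz/non-explosion-plus-uniqueness argument (or an equivalent a priori bound $\|\nu_t\|\le\zeta_t$ up to $\theta$); your remaining steps, including the deferral to \cite{MrNek07} for the true-martingale upgrade, then align with the paper.
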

\noindent The proof is postponed to  Section \ref{sec:proofTh3.3}.\\
Since $V_\rho^{-1}(t, \cdot)$ and $U_z(t,\cdot)$ are decreasing functionals, the optimal pension  ${p_t^*(a)  = \rho_t^* p_t^{min}(a)}$ is increasing in the fund's wealth, which is natural. In particular, if the number of workers and/or the wage amount decreases, the contribution $C_t$ can become lower to the pension amount $P_t^*$, which in turn can lead to a decrease in the optimal pension's value. On the other hand, when minimum pension amount $P^{min}_t$ to be paid increases (for instance due to a decrease of death rates at older ages), $p_t^*(a)$ decreases.\\
The optimal investment strategy depends on  the market risk premium vector $(\eta_t)$ which can be correlated to wages and demographic rates. The  optimal investment strategy also depends on the inverse risk aversion $- \frac{U_z(t,F^*_t)}{U_{zz}(t,F^*_t)}$ and on the (buffer fund) utility diffusion coefficient through the term $\gamma^{\cal R}_z(t,F^*_t)$. Besides,   $\gamma^{\cal R}_z(t,F^*_t)$ also depends on  the financial, economic and demographic parameters, due to the social planner's preferences and the HJB consistency constraint  \eqref{HJB-Beta}. We refer to Section \ref{sec:exCRRA} for details on a specific example. 
\begin{Corollary}
Under the assumptions and notations of Theorem  \ref{ThOpt}, 
\begin{equation}
\label{eqlimpmin}
\lim_{z \rightarrow \mathfrak{K}_t}  \pi^*_t (z)  = \delta^{\mathfrak{K}}_t, \quad \lim_{z \rightarrow \mathfrak{K}_t} \rho^*_t (z) = 1, \quad \forall t  \geq 0 \; \mathbb{P}-\text{a.s.}
\end{equation}
This means that the optimal strategy converges  to the ``minimal" buffer fund strategy when $F^*$ tends to the sustainability bound. 
\end{Corollary}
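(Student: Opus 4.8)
The plan is to treat the two limits in \eqref{eqlimpmin} separately; in both cases the driving mechanism is the Inada condition at the lower boundary and its consequence \eqref{consInada}, namely that the absolute risk aversion $-U_{zz}/U_z$ of $U$ blows up as $z \to \mathfrak{K}_t$.

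For the pension limit I would start from $\rho^*_t(z) = V_\rho^{-1}(t, P^{min}_t U_z(t,z)) \vee 1$ in \eqref{p*}. Condition (iv) of Definition \ref{DynamicUtility} gives $U_z(t,z) \to +\infty$ as $z \to \mathfrak{K}_t$, hence $P^{min}_t U_z(t,z) \to +\infty$. By Remark \ref{RemInadaV}, $v$ (and therefore $V$) inherits an Inada condition at $\underline{\rho}_t$, so $V_\rho(t,\rho) \to +\infty$ as $\rho \to \underline{\rho}_t$ and consequently $V_\rho^{-1}(t,y) \to \underline{\rho}_t \leq 1$ as $y \to +\infty$. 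Thus $V_\rho^{-1}(t, P^{min}_t U_z(t,z))$ drops below $1$ for $z$ close enough to $\mathfrak{K}_t$, and the truncation $\vee\,1$ forces $\rho^*_t(z) \to 1$.

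For the investment limit the idea is to inject the expression for $\gamma_z$ from Proposition \ref{eq:Lipschitzcarac} into \eqref{pi*}. Differentiating \eqref{barGamma} in $z$ yields $\gamma_z(t,z) = \bar\gamma_z(t, z-\mathfrak{K}_t) - U_{zz}(t,z)\delta^{\mathfrak{K}}_t$; since $\delta^{\mathfrak{K}}_t \in \mathcal{R}_t$ by Theorem \ref{SusCond}, projecting onto $\mathcal{R}_t$ gives $\gamma^{\mathcal{R}}_z(t,z) = \bar\gamma^{\mathcal{R}}_z(t, z-\mathfrak{K}_t) - U_{zz}(t,z)\delta^{\mathfrak{K}}_t$. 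Substituting into \eqref{pi*} and simplifying, the $U_{zz}\delta^{\mathfrak{K}}_t$ term cancels against the denominator and I obtain
\begin{equation*}
\pi^*_t(z) = \delta^{\mathfrak{K}}_t - \frac{\bar\gamma^{\mathcal{R}}_z(t, z-\mathfrak{K}_t) + U_z(t,z)\eta_t}{U_{zz}(t,z)}.
\end{equation*}
It then suffices to show the residual fraction vanishes. Using $\bar U_z(t, z-\mathfrak{K}_t) = U_z(t,z)$ together with Assumption \ref{HypBarU}, one has $\|\bar\gamma^{\mathcal{R}}_z(t, z-\mathfrak{K}_t)\| \leq \|\bar\gamma_z(t, z-\mathfrak{K}_t)\| \leq \zeta_t\, U_z(t,z)$, whence the residual is bounded in norm by $(\zeta_t + \|\eta_t\|)\, U_z(t,z)/|U_{zz}(t,z)|$, which is exactly $(\zeta_t + \|\eta_t\|)$ times the inverse absolute risk aversion and therefore tends to $0$ by \eqref{consInada}. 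This gives $\pi^*_t(z) \to \delta^{\mathfrak{K}}_t$.

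The only delicate point is this investment limit: one must recognize that the seemingly singular factor $1/U_{zz}$ in \eqref{pi*} is precisely what cancels the $U_{zz}\delta^{\mathfrak{K}}_t$ term and isolates $\delta^{\mathfrak{K}}_t$, the surviving remainder involving only the ratio $U_z/|U_{zz}|$, i.e. the reciprocal of the risk aversion, which the Inada condition at $\mathfrak{K}_t$ forces to zero. The pension limit and the control of $\bar\gamma_z$ through Assumption \ref{HypBarU} are then routine.
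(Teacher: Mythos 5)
Your proof is correct and follows essentially the same route as the paper: the pension limit via the Inada conditions on $U$ at $\mathfrak{K}_t$ and on $V$ at $\underline{\rho}_t \leq 1$ combined with the truncation $\vee\,1$, and the investment limit by rewriting $\gamma^{\mathcal{R}}_z(t,z)+U_{zz}(t,z)\delta^{\mathfrak{K}}_t = \bar\gamma^{\mathcal{R}}_z(t,z-\mathfrak{K}_t)$ (using $\delta^{\mathfrak{K},\perp}_t=0$ from Theorem \ref{SusCond}), bounding the residual by $(\zeta_t+\|\eta_t\|)\,U_z/|U_{zz}|$ via \eqref{boundenzero}, and killing it with the exploding risk aversion \eqref{consInada}. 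Your version is in fact slightly more careful than the paper's, which leaves the role of $\delta^{\mathfrak{K},\perp}_t=0$ implicit and writes the bound with a stray $C_t$ where your $\zeta_t$ is the correct constant.
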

\begin{proof} First, using the expression of the optimal portfolio  \eqref{pi*},
\begin{eqnarray*}
\| \pi^*_t (z) - \delta_t^{\mathfrak{K}}\| & = & ||\frac{\gamma^{\mathcal R}_{z}(t,z)+U_{z}(t,z) \eta_t}{U_{zz}(t,z)}+\delta^\mathfrak{K}_{t}||\\
&\le& ||\frac{\gamma^{\mathcal R}_{z}(t,z)+U_{zz}(t,z) \delta^\mathfrak{K}_{t}}{U_{zz}(t,z)}||+||\eta_{t}||\frac{U_{z}(t,z)}{|U_{zz}(t,z)|}\\
&\stackrel{\eqref{boundenzero}}{\le}& (\zeta_{t}+||\eta_{t}||)\frac{U_{z}(t,z)}{|U_{zz}(t,z)|}\stackrel{z\to\mathfrak{K}_{t }}{\longrightarrow}0
\end{eqnarray*}
since $\lim_{z \rightarrow \mathfrak{K}_t} \frac{{U}_{zz}(t,z)}{ U_z(t,z)  }=-\infty$ from Remark \ref{RemarkInada}.  
Finally,  by the Inada condition on $U$ and $V$,
$$\lim_{z \rightarrow \mathfrak{K}_t}  \rho^*_t (z) \vee 1 = \lim_{z \rightarrow \mathfrak{K}_t}  V_{\rho}^{-1}\big(t,P^{min}_t U_{z}(t,z)\big)\vee 1 = \lim_{\rho \rightarrow  +\infty} V_{\rho}^{-1}(t, \rho) \vee1 = \underline \rho_t \vee  1= 1.$$
\end{proof}
\paragraph{Example \ref{Ex1}} We come back to  the  first example, in which the pension and individual pensioners' utility $\bar v$ do not depend on the age of the pensioner. The social planner attributes the global  weight $\omega^r_t = \alpha_c   \int_{a_r}^{\infty}  c_t (a) n(t,a) da $   to  pensioners living at time $t$, based on their past contributions (see \eqref{eqpoidsexample1}). In this example, the  aggregate utility of pension is given by \eqref{eqVexample1}:
$$V(t,\rho) =\int^\infty_{a_r} \bar v(t,\rho p^{min}_t )\alpha_c c_t (a)n(t,a) da=  \omega^r_t \bar v(t, \rho p^{min}_t).$$
We have $ p^{min}_t\ V_{\rho}^{-1}(t,z)= \bar v_{\rho}^{-1}\big(t, \frac{z}{p^{min}_t \omega_t^r} \big).$  Besides, $P^{min}_t= N_t^r p^{min}_t$ where  $N_t^r=\int_{a_r}^{\infty}  n(t,a) da $ is the number of pensioners at time $t$. This implies that the optimal pension for each pensioner is
\begin{equation}\label{pstarex1}
p^*_t (z)= p^{min}_t\rho^{*}_t (z)= \bar v_{\rho}^{-1}\big(t, \frac{N_t^r}{\omega_t^r} U_{z}(t,z)\big) \vee p_t^{min}.
\end{equation}
The pension amount then increases with the quantity $\frac{\omega_t^r}{N_t^r}$,  corresponding to the average individual contribution of a  pensioner living at  time $t$.
 
\noindent {\bf Example} \ref{Ex2}  In the second  example of age-dependent  pension and  utility $\bar v$,   the aggregate utility of pension $V$  is a complex aggregation between cohorts given by \eqref{EqVexample2}:
$$V(t,\rho) = \int^\infty_{a_r} \bar v(t, a, \rho \, p_{ret}(a_r + t-a)e^{\int_{a_r+t-a}^t\lambda_u du} ) \, n(t,a) da.$$
In all generality, there is no straightforward formula for    $V_{\rho}^{-1}$ in terms of  the $\bar v_{\rho}^{-1}$, except in particular cases of dynamic utilities such as dynamic CRRA utilities (see Section \ref{sec:exCRRA}).

\subsection{Proof of  Theorem \ref{SusCond}}\label{sec:proofTh3.2}
Theorem \ref{SusCond} states that  in order to satisfy the consistency condition  \eqref{HJB-Beta},  the sustainability bound $\mathfrak{K}$ is necessarily   a  buffer fund  receiving the contribution $C$ and paying the minimal pension amount  $ P^{min}$.  We recall the dynamics of  $\mathfrak{K}$ is given by \eqref{EqDynK}:
$d\mathfrak{K}_t = \mu_t^{\mathfrak{K}} dt + \delta^{\mathfrak{K}}_t \cdot dW_t.$
As in Proposition \ref{eq:Lipschitzcarac}, we consider  hereafter the stochastic utility  $\bar U(t,z)= U(t,z+\mathfrak{K}_t)$ whose local characteristics are given,   by 
\begin{numcases}{}
\beta(t,z) =  {\bar \beta}(t, z-\mathfrak{K}_t) - {\bar U}_z(t, z-\mathfrak{K}_t) \mu^\mathfrak{K}_t + \frac{1}{2}{\bar U}_{zz}(t,z-\mathfrak{K}_t)||\delta^\mathfrak{K}_t||^2 - {\bar \gamma}_z(t,z-\mathfrak{K}_t)\delta^\mathfrak{K}_t,\nonumber\\
\gamma(t,z) = \bar{\gamma}(t,z-\mathfrak{K}_t) - \bar{U}_z(t,z-\mathfrak{K}_t)\delta^{\mathfrak{K}}_t,\nonumber
\end{numcases}
by the Itô-Ventzel's formula. This combined with the HJB-constraint below,
\begin{equation*}
\label{Consistency}
\beta(t,z) =-U_{z}(t,z) (zr_t+C_t)
+\demi U_{zz}(t,z)\| \frac{\gamma^{\mathcal R}_{z}(t,z)+U_{z}(t,z) \eta_t}{U_{zz}(t,z)}\|^{2} - V(t,\rho^*_t(z)) + U_z(t,z) P^{min}_t\rho^*_t(z),
\end{equation*}
\noindent yields that
\begin{align}
& \bar \beta(t, z-\mathfrak{K}_t) - \bar U_z(t, z-\mathfrak{K}_t) \mu^\mathfrak{K}_t + \frac{1}{2}\bar U_{zz}(t,z-\mathfrak{K}_t)||\delta^\mathfrak{K}_t||^2 - \bar \gamma_z(t,z-\mathfrak{K}_t)\cdot \delta^\mathfrak{K}_t   \nonumber \\
& \nonumber \hspace{1cm}= - \bar U_{z}(t,z-\mathfrak{K}_t) (zr_t+C_t) - V(t,\rho^*_t(z)) + \bar U_z(t,z-\mathfrak{K}_t) P^{min}_t \rho^*_t(z) \\
& \nonumber \hspace{1cm} + \demi \bar U_{zz}(t,z-\mathfrak{K}_t)||\frac{\bar{\gamma}^{\mathcal R}_z(t,z-\mathfrak{K}_t) - \bar{U}_{zz}(t,z-\mathfrak{K}_t)\delta^{\mathfrak{K},\mathcal R}_t + \bar U_z(t,z-\mathfrak{K}_t)\eta_t}{\bar{U}_{zz}(t,z-\mathfrak{K}_t)}||^2.
 \end{align}
 Reorganizing the terms, we have  (omitting the variables $(t,z-\mathfrak{K}_t)$ to simplify  notations): 
\begin{equation*}
\label{Consistency2}\bar \beta +  V(t,\rho^*_t(z))  = \bar U_z \; (\mu^\mathfrak{K}_t - zr_t  - C_t + P^{min}_t \rho_t^*(z))  + \demi \frac{1}{\bar U_{zz}}\big(||\bar{\gamma}_z^{\mathcal R} - \bar{U}_{zz} \delta^{\mathfrak{K},\mathcal R}_t + \bar U_z\eta_t||^2 - ||\bar U_{zz}\delta^\mathfrak{K}_t||^2\big)
 + \bar \gamma_z \cdot \delta^\mathfrak{K}_t.
 \end{equation*}
Rewriting the last term as follows,
\begin{align*}
 & \demi \frac{1}{\bar U_{zz}}\big(||\bar{\gamma}_z^{\mathcal R}- \bar{U}_{zz} \delta^{\mathfrak{K},\mathcal R}_t + \bar U_z\eta_t||^2 - ||\bar U_{zz}\delta^\mathfrak{K}_t||^2 \big) + \bar \gamma_z \cdot \delta^\mathfrak{K}_t   \\
&  \hspace{1cm} = \demi \frac{1}{\bar U_{zz}}\big(|| \bar{\gamma}_z^{\mathcal R} +  \bar U_z\eta_t||^2  - 2(\bar{\gamma}_z^{\mathcal R} +  \bar U_z\eta_t)\cdot \bar{U}_{zz} \delta^{\mathfrak{K}}_t -  ||\bar U_{zz}\delta^{\mathfrak{K},\perp}_t||^2 \big)+ \bar \gamma_z \cdot \delta^\mathfrak{K}_t   \\
&  \hspace{1cm}=  \demi \frac{1}{\bar U_{zz}}\big(|| \bar{\gamma}_z^{\mathcal R} +  \bar U_z\eta_t||^2  - ||\bar U_{zz}\delta^{\mathfrak{K},\perp}_t||^2 \big)   + \bar \gamma_z^{\perp}\cdot \delta^\mathfrak{K}_t  - \bar U_z  \eta_t \cdot \delta^{\mathfrak{K}}_t,
\end{align*}
we get that the consistency condition \eqref{HJB-Beta} is equivalent to
\begin{eqnarray*}
& &\bar \beta (t,z-\mathfrak{K}_t) +  V(t,\rho^*_t(z))  \\
&  = &\bar U_z \; (\mu^\mathfrak{K}_t - zr_t  - C_t + P^{min}_t \rho ^*(z) - \eta_t \cdot \delta^{\mathfrak{K}}_t) + \demi \frac{1}{\bar U_{zz}}\big(|| \bar{\gamma}_z^{\mathcal R} +  \bar U_z\eta_t||^2 -  ||\bar U_{zz}\delta^{\mathfrak{K},\perp}_t||^2 \big)    + \bar \gamma_z^{\perp}\cdot \delta^\mathfrak{K}_t  \\
& =& \bar U_z \Big[\mu^\mathfrak{K}_t - zr_t  - C_t + P^{min}_t \rho_t^*(z) - \eta_t \cdot \delta^{\mathfrak{K}}_t +  \demi \frac{1}{\bar U_{zz}\bar U_z }\big(|| \bar{\gamma}_z^{\mathcal R} +  \bar U_z\eta_t||^2 - ||\bar U_{zz}\delta^{\mathfrak{K},\perp}_t||^2 \big)   + \frac{ \bar \gamma_z^{\perp}}{\bar U_z}\cdot \delta^\mathfrak{K}_t  \Big].
\end{eqnarray*}
Let us analyze the  behavior of the  left hand size and the right hand size of this identity, when $z \to \mathfrak{K_t}$.
We have  by \ref{defU} and \ref{hypV} that $ \lim\limits_{z \to \mathfrak{K_t}} \rho^*_t(z) =  \lim\limits_{z \to \mathfrak{K_t}} V_{p}^{-1}\big(t, P^{min}_t U_z(t,z) \big)\vee  1  = \underline{\rho}_t \vee 1= 1$.\\
Furthermore, $\bar U(t,0)$ is well defined. Hence,  by continuity the limit of $ \bar \beta(t,z-\mathfrak{K}_t)$ when $z -\mathfrak{K_t}\to 0$ exists and is equal to $\bar \beta(t,0)$. 
Thus,  the left-hand side of the previous equation tends to a constant  $ l_t <\infty$. \\
For  the right-hand side,  since by the Inada condition $\bar U_z(t,z-\mathfrak{K}_t) \to \infty$ when $ z\to \mathfrak{K}_t$, the bracketed term can therefore only tend towards zero, that is 
\begin{equation*}
\lim_{z \to \mathfrak{K}_t} \Big[\mu^\mathfrak{K}_t - zr_t  - C_t + P^{min}_t \rho_t^*(z) - \eta_t \cdot \delta^{\mathfrak{K}}_t +  \demi \frac{1}{\bar U_{zz}\bar U_z }\big(|| \bar{\gamma}_z^{\mathcal R} +  \bar U_z\eta_t||^2 - ||\bar U_{zz}\delta^{\mathfrak{K},\perp}_t||^2 \big)   + \frac{ \bar \gamma_z^{\perp}}{\bar U_z}\cdot \delta^\mathfrak{K}_t\Big]   =  0,
\end{equation*}
which we rewrite
\begin{equation*}
\lim_{z \to \mathfrak{K}_t} \Big[\mu^\mathfrak{K}_t - zr_t  - C_t +  P^{min}_t \rho_t^*(z) - \eta_t \cdot \delta^{\mathfrak{K}}_t + \frac{ \bar \gamma_z^{\perp}}{\bar U_z}\cdot \delta^\mathfrak{K}_t+ \demi \big( \frac{|| \bar{\gamma}_z^{\mathcal R} \|^2 }{\bar U_{zz}\bar U_z } + 2 \frac{\bar \gamma_z^{\mathcal{R}}\cdot \eta_t}{\bar U_{zz}} + \frac{\bar U_{z}}{\bar U_{zz}}\| \eta_t\|^2 - \frac{\bar U_{zz}}{\bar U_z}\| \delta_t^{\mathfrak{K},\perp}\|^2\big) 
   \Big]   =  0,
\end{equation*}
or equivalently,
\begin{equation}\label{lalimite}
\lim_{z \to \mathfrak{K}_t} \Big[A_{t}(z)+ \demi \big( B_{t}(z) - \frac{\bar U_{zz}}{\bar U_z}\| \delta_t^{\mathfrak{K},\perp}\|^2\big) 
   \Big]   =  0,
\end{equation}
where we have used the notations
\begin{numcases}{}
A_{t}(z)=\mu^\mathfrak{K}_t - zr_t  - C_t + P^{min}_t \rho_t^*(z) - \eta_t \cdot \delta^{\mathfrak{K}}_t + \frac{ \bar \gamma_z^{\perp}}{\bar U_z}\cdot \delta^\mathfrak{K}_t,\nonumber\\
B_{t}(z)=\frac{|| \bar{\gamma}_z^{\mathcal R} \|^2 }{\bar U_{zz}\bar U_z } + 2 \frac{\bar \gamma_z^{\mathcal{R}}\cdot \eta_t}{\bar U_{zz}} + \frac{\bar U_{z}}{\bar U_{zz}}\| \eta_t\|^2.\nonumber
\end{numcases}
We shall now study the limits of $A_t(z)$ and $B_t(z)$ when z tends to $\mathfrak{K}_t$. For this, we recall according to Remark \ref{RemarkInada}, that if the  Inada condition holds, then
$$\lim\limits_{z \to \mathfrak{K}_t} \frac{\bar U_{zz}(t,z-\mathfrak{K}_t)}{\bar U_z(t,z-\mathfrak{K}_t)}  = \lim\limits_{z \to \mathfrak{K}_t } \frac{U_{zz}(t,z)}{U_z(t,z)} = +\infty.$$
Also, %
under   Assumption \ref{HypBarU}, there exist a random  bound  $\zeta_t$ satisfying a.s.  {  $\int_{0}^T\zeta_{t}^2dt<\infty$} for any $T$ such that
$$ \lim_{z \to \mathfrak{K}_t}  \frac{|| \bar \gamma_{z}(t,z-\mathfrak{K}_t) \|}{\bar  U_{z}(t,z-\mathfrak{K}_t)} \leq \zeta_t $$
Thus, $|B_{t}(z)|=|\frac{\|\bar{\gamma}_z^{\mathcal R} \|^2 }{\bar U_{zz}\bar U_z }  + 2 \frac{\bar \gamma_z^{\mathcal{R}}\cdot \eta_t}{\bar U_{zz}}  + \frac{\bar U_{z}}{\bar U_{zz}}\| \eta_t\|^2 | \leq \frac{\bar U_z}{\bar U_{zz}}(  \zeta_t   +  3\| \eta_t\|^2 ) \to 0 $ when $(z-\mathfrak{K}_t ) \to 0$. \\
Furthermore,  
$$ |A_{t}(z)|=|(\mu^\mathfrak{K}_t - zr_t  - C_t + P^{min}_t \rho_t^*(z) - \eta_t \cdot \delta^{\mathfrak{K}}_t) + \frac{ \bar \gamma_z^{\perp}}{\bar U_z}\cdot \delta^\mathfrak{K}_t | \leq | (\mu^\mathfrak{K}_t - zr_t  - C_t + P^{min}_t \rho_t^*(z) - \eta_t \cdot \delta^{\mathfrak{K}}_t)|  + \zeta_t \| \delta^{\mathfrak{K},\perp}_t \|, $$ 
consequently $A_{t}(z)$  has a finite  limit when $(z -  \mathfrak{K}_t ) \to 0 $. This combined with $|B_{t}(z)| \to 0 $ when $(z-\mathfrak{K}_t ) \to 0$,   \eqref{lalimite} and the fact that $\lim\limits_{z \to \mathfrak{K}_t} \frac{\bar U_{zz}(t,z-\mathfrak{K}_t)}{\bar U_z(t,z-\mathfrak{K}_t)} = +\infty$, implies that  necessarily 
\begin{equation}
  \delta_t^{{\mathfrak K},\perp}  =  0.
\end{equation}
It  follows  from  \eqref{lalimite} that 
\begin{equation}
 \lim_{z \to \mathfrak{K}_t} (\mu^\mathfrak{K}_t - zr_t  - C_t + P^{min}_t \rho_t^*(z) - \eta_t \cdot \delta^{\mathfrak{K}}_t  )=\mu^\mathfrak{K}_t - \mathfrak{K}_tr_t  - C_t + P^{min}_t - \eta_t \cdot \delta^{\mathfrak{K}}_t= 0,
\end{equation}
where we have used, according to \eqref{eqlimpmin},
$
\lim_{z\to \mathfrak{K}_t} \rho^*_t(z) = 1.
$
This concludes the proof of   Theorem \ref{SusCond}.
\subsection{Proof of Theorem \ref{ThOpt}}\label{sec:proofTh3.3}
In order to show that the  sustainability condition  $(F^*_{t}\ge \mathfrak{K}_{t},~ dt \otimes d\P ~\text{a.s.}$) is satisfied under the assumptions of Theorem \ref{ThOpt}, we study the intermediate process $(U_z(t,F^*_t))$. This process is actually the optimal   state price density process  $(Y^*_t)$. We refer to \cite{MrNek07} for more details.\\
Combining  the bounds  \eqref{boundenzero}   on the derivatives of the diffusion coefficient $\bar \gamma(t,z)$  of $\bar U$ and  Theorem \ref{SusCond}, we first show that $U_z(t,F^*_t)$ is the unique strong (non-explosive) solution of an SDE with Lipschitz coefficients.  This property, combined with Inada conditions, ensures that the sustainability condition is necessarily satisfied.  This preliminary result is  stated in the following Proposition \ref{MargU}.
\begin{Proposition}\label{MargU} 
Under the assumptions and notations  of Theorem \ref{ThOpt}, 
\begin{itemize}
\item[(i)] The dynamics of the marginal utility $U_{z}$ is given by
\begin{eqnarray*}
dU_{z}(t,z)&=&\Big(-U_{zz}(t,z) \mu^{*}_{t}(z)-U_{z}(t,z) r_t
-\demi U_{zzz}(t,z)\| \pi^*_{t}(z)\|^{2}-\gamma_{zz}^{\mathcal R}\cdot\pi^*_{t}(z)\Big)dt\nonumber\\
&+&\gamma_{z}(t,z) \cdot dW_{t}
\end{eqnarray*}
\item[(ii)] Let $\tau^F = \inf\{ t\geq 0 , F_t^{*} = \mathfrak{K_t}\}$.  
 $U_{z}(t,F^*_{t})$ is the  solution of the following SDE on $[0,\tau^F[$ of
\begin{eqnarray}\label{eqU_{z}}
dU_{z}(t,F^*_{t})&=&-U_{z}(t,F^*_{t}) r_tdt+
\big(\gamma^{\perp}_{z}(t,F^*_{t})-U_{z}(t,F^*_{t})\eta_{t}\big) \cdot dW_{t}.
\end{eqnarray}
\end{itemize} 
\end{Proposition}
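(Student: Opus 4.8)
The plan is to establish the two parts in order: first derive the dynamics of the marginal random field $U_z$, then compose it with the optimal fund via the It\^o--Ventzel formula. For part (i), I would start from the decomposition $dU(t,z)=\beta(t,z)\,dt+\gamma(t,z)\cdot dW_t$ and differentiate it once in the space variable $z$. Since $U$ is assumed of class $\mathcal{C}^{3,\delta}$ in Definition \ref{defU}, the regularity transfer between a semimartingale random field and its local characteristics recalled in Section \ref{SDU} guarantees that $\beta,\gamma$ are $z$-differentiable and that $dU_z(t,z)=\beta_z(t,z)\,dt+\gamma_z(t,z)\cdot dW_t$. It then remains to compute $\beta_z$ from the consistency constraint \eqref{HJB-Beta}.

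Writing $\beta=-U_z\,(zr_t+C_t-P^{min}_t\rho^*_t(z))+\demi U_{zz}\|\pi^*_t(z)\|^2-V(t,\rho^*_t(z))$ and differentiating, the terms carrying $\partial_z\rho^*_t(z)$ cancel by first-order optimality of the pension maximiser: on $\{\rho^*>1\}$ the defining relation $V_\rho(t,\rho^*)=P^{min}_t U_z$ from \eqref{pf}--\eqref{p*} makes $U_z P^{min}_t\partial_z\rho^*$ cancel $V_\rho(t,\rho^*)\partial_z\rho^*$, while on the constrained region $\{\rho^*=1\}$ one has $\partial_z\rho^*=0$; in both cases the first-order terms collapse to $-U_z r_t$. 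For the remaining portfolio terms I would differentiate the explicit formula $\pi^*=-(\gamma^{\mathcal R}_z+U_z\eta_t)/U_{zz}$ of \eqref{pi*}, obtaining $U_{zz}\pi^*\cdot\partial_z\pi^*=-\gamma^{\mathcal R}_{zz}\cdot\pi^*-U_{zz}\pi^*\cdot\eta_t-U_{zzz}\|\pi^*\|^2$; combining this with the identity $\mu^*_t(z)=zr_t+C_t-P^{min}_t\rho^*_t(z)+\pi^*_t(z)\cdot\eta_t$ yields exactly the stated drift $-U_{zz}\mu^*_t(z)-U_z r_t-\demi U_{zzz}\|\pi^*\|^2-\gamma^{\mathcal R}_{zz}\cdot\pi^*$.

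For part (ii), the plan is to apply the It\^o--Ventzel formula \eqref{ItoVentzel} to the random field $U_z$, whose characteristics $(\beta_z,\gamma_z)$ are those just obtained, composed with the optimal fund $F^*$ of dynamics \eqref{dynF*}, namely $dF^*_t=\mu^*_t(F^*_t)\,dt+\pi^*_t(F^*_t)\cdot dW_t$. The contributions are the intrinsic part $\beta_z\,dt+\gamma_z\cdot dW_t$, the first-order term $U_{zz}\,dF^*$, the second-order term $\demi U_{zzz}\,d\langle F^*\rangle=\demi U_{zzz}\|\pi^*\|^2\,dt$, and the It\^o--Ventzel covariation $\gamma^{\mathcal R}_{zz}\cdot\pi^*\,dt$. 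Adding the drift parts, the three extra pieces $U_{zz}\mu^*$, $\demi U_{zzz}\|\pi^*\|^2$ and $\gamma^{\mathcal R}_{zz}\cdot\pi^*$ exactly cancel the corresponding terms inside $\beta_z$, leaving the clean drift $-U_z r_t$. For the diffusion, the martingale part is $(\gamma_z+U_{zz}\pi^*)\cdot dW_t$, and substituting $U_{zz}\pi^*=-(\gamma^{\mathcal R}_z+U_z\eta_t)$ together with $\gamma_z=\gamma^{\mathcal R}_z+\gamma^\perp_z$ produces $(\gamma^\perp_z-U_z\eta_t)\cdot dW_t$, which is the announced SDE on $[0,\tau^F[$.

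The main obstacle I anticipate is not the algebra but its justification. First, differentiating the SPDE term by term and applying It\^o--Ventzel both rely on the regularity transfer between $U$ and its local characteristics; this is exactly why Definition \ref{defU} imposes $\mathcal{C}^{3,\delta}$ rather than merely $\mathcal{C}^2$, since \eqref{ItoVentzel} requires the martingale field $\gamma_z$ to be differentiable and the composition to produce $\gamma_{zz}$ and $U_{zzz}$. Second, the envelope cancellation in $\beta_z$ must be argued separately on the free-boundary regions $\{\rho^*>1\}$ and $\{\rho^*=1\}$, since $z\mapsto\rho^*_t(z)=V_\rho^{-1}(t,P^{min}_tU_z)\vee 1$ is only Lipschitz with a possible kink where the adequacy constraint becomes active; the point is that the product $V_\rho(t,\rho^*)\,\partial_z\rho^*$ is continuous across the interface and the two contributions match on each region, so the resulting drift is well defined everywhere. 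Once these two justifications are in place, parts (i) and (ii) follow from the direct computations above.
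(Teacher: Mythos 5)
Your proposal is correct and follows essentially the same route as the paper's own proof: differentiating the HJB drift \eqref{HJB-Beta} in $z$ (with the same envelope cancellation of the $\partial_z\rho^*$ terms argued separately on $\{\rho^f>1\}$ and $\{\rho^f\le 1\}$, and the same computation of $\partial_z\pi^*$ from \eqref{pi*}), then applying It\^o--Ventzel along $F^*$ and using $\gamma_z+U_{zz}\pi^*=\gamma^\perp_z-U_z\eta_t$. Even your regularity justification matches the paper, which invokes the $\mathcal{C}^{3,\delta}$ assumption together with the transfer results of \cite{MrNek1} (Corollary 1.3 and Theorem 2.2) exactly where you appeal to the regularity transfer recalled in Section \ref{SDU}.
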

\noindent The proof of Theorem \ref{ThOpt} is derived from Theorem  \ref{SusCond}  and  Proposition \ref{MargU}, by following the same steps  than in the proof of Theorem 4.8 in \cite{MrNek07}.
\begin{proof}[\underline{Proof of Theorem \ref{ThOpt}}]
Inspired by \eqref{eqU_{z}},  let us consider the following SDE
$$dY_t=-Y_t r_tdt+\big( \gamma^{\perp}_{z}(t,  U^{-1}_{z}(t,Y_{t})  )-Y_t \eta_{t}\big) \cdot dW_{t}.$$
By Theorem \ref{SusCond}, $\delta^{\mathfrak{K},\perp}=0$ a.s..  Hence, using the relation \eqref{liengammaet bar} between $\gamma$ and $\bar \gamma$,  the condition  \eqref{boundenzero} can be rewritten as  $|| \gamma^\perp_z(t,z)|| \le \zeta_{t}U_{z}(t,z)$ for some non negative  process  $\zeta$ such that a.s. $\int_{0}^T\zeta^2_{t}dt<\infty$ for any $T$. 
This implies that the coefficients of this SDE are Lipschitz, yielding that this SDE admits  a unique strong (non explosive) solution.
Furthermore, by Proposition \ref{MargU},  $U_{z}(t,F^*_{t})$ is solution of this SDE, which yields that $\tau^F=+\infty$ i.e.   $F^*_t > \mathfrak{K}_t$, $\forall t>0$, $\mathbb P$ a.s.. Besides, 
as a consequence of   Assumptions  \ref{hypVbis} and \ref{HypBarU}, $\big(\pi^*_{t}(F^*_t), \rho^*_{t}(F^*)\big)$ verify the required integrability condition, which concludes the proof.  
\end{proof}
\noindent The last step consists in proving Proposition \ref{MargU}.
\begin{proof}[\underline{ Proof of Proposition \ref{MargU}}] \mbox{}\\
$(i)$ Since by assumption $U$ satisfies the HJB constraint \eqref{HJB-Beta},  its dynamics is 
\begin{eqnarray}
dU(t,z)&=&\Big(-U_{z}(t,z) \big(zr_t+C_t- P^{min}_t  \rho^*_{t}(z)\big)
+\demi U_{zz}(t,z)\| \pi^*_{t}(z)\|^{2}- V(t, \rho^*_{t}(z))\Big)dt\nonumber\\
&+&\gamma(t,z) \cdot dW_{t}.\label{dynU}
\end{eqnarray}
In addition, since  $U$ is of class ${\cal C}^{3,\delta}$ and by Assumption \ref{HypBarU}, this implies by  Corollary 1.3 in \cite{MrNek1}, that $ \beta_z$ and $ \gamma_z$ are the local characteristics of the space derivative $U_z$. \\
On  $\{ 1< \rho^f_t(z)\}$,  $\rho^*_t(z) =\rho^f_t(z) \in \mathcal{C}^1$ with     $U_{z}(t,z)P^{min}_t =  V_{\rho} (t, \rho^f_{t}(z))$ by \eqref{pf}. Thus $U_{z}(t,z)P^{min}_t  \partial_{z}\rho^*_{t}(z)=V_{\rho} (t, \rho^*_{t}(z))\partial_{z}\rho^*_{t}(z)$ on $\{ 1< \rho^f_t(z)\}$.\\
This last inequality also holds on  $\{ \rho^f_t(z) \leq 1\}$ since  on this set, $\rho^*_{t}(z) \equiv 1$ implying  $\partial_{z}\rho^*_{t}(z)=0$. 
Using those simplifications, the derivative with respect to $z$ of the dynamics \eqref{dynU} of $U$ yields by \cite[Theorem 2.2]{MrNek1},
\begin{eqnarray}\label{eqU_{z}temp}
dU_{z}(t,z)&=&\Big(-U_{zz}(t,z) \big(zr_t+C_t-P^{min}_t \rho^*_{t}(z)\big)-U_{z}(t,z) r_t\\
\nonumber &+&\demi U_{zzz}(t,z)\| \pi^*_{t}(z)\|^{2}+ U_{zz}(t,z)\partial_{z} \pi^*_{t}(z). \pi^*_{t}(z)\Big)dt 
+ \gamma_{z}(t,z).dW_{t}.
\end{eqnarray}
\noindent
Now,  using the definition  \eqref{pi*} of $\pi^*$,
\begin{equation}
\label{eqU_{z}2}
\partial_{z} \pi^*_{t}(z)=-\frac{U_{zzz}(t,z)}{U_{zz}(t,z)}\pi^*_{t}(z)-\frac{\gamma^{\mathcal R}_{zz}(t,F_t^*)+U_{zz}(t,F_t^*) \eta_t}{U_{zz}(t,F_t^*)},
\end{equation}
which implies that
\begin{eqnarray}\label{eqU_{z}1}
&&\demi U_{zzz
}(t,z)\| \pi^*_{t}(z)\|^{2}+U_{zz}(t,z)\partial_{z} \pi^*_{t}(z). \pi^*_{t}(z)\nonumber\\
&&=-\demi U_{zzz}(t,z)\| \pi^*_{t}(z)\|^{2}-\gamma_{zz}^R.\pi^*_{t}(z)-U_{zz}(t,z)\pi^*_{t}(z).\eta_{t}.
\end{eqnarray}
Using the notation $\mu^{*}_{t}(z)=zr_{t}+C_{t}-P^{min}_t \rho^{*}_{t}(z)+\pi^{*}_{t}(z)\eta_{t}$  for the drift of $F^*$ and injecting \eqref{eqU_{z}2} and \eqref{eqU_{z}1} in the dynamics \eqref{eqU_{z}temp}, yields
\begin{eqnarray*}
dU_{z}(t,z)&=&\Big(-U_{zz}(t,z) \mu^{*}_{t}(z)-U_{z}(t,z) r_t
-\demi U_{zzz}(t,z)\| \pi^*_{t}(z)\|^{2}-\gamma_{zz}^{\sigR}.\pi^*_{t}(z)\Big)dt\nonumber\\
&+&\gamma_{z}(t,z).dW_{t}.
\end{eqnarray*}
This establishes the first statement of the theorem. \\
(ii)  Applying  Itô-Ventzel's formula to $U_{z}(t,F^*_{t})$  on $\{t <\tau^F\}$ gives (recalling the dynamics  $dF^{*}_{t}=\mu^{*}_t (F^{*}_{t})dt+\pi^{*}_t (F^{*}_{t}).dW_{t}$):
\begin{eqnarray*}
dU_{z}(t,F^*_{t})&=&-U_{z}(t,F^*_{t}) r_tdt+
\big(\gamma_{z}(t,F^*_{t})+U_{zz}(t,F^*_{t})\pi^*_{t}(F^*_{t})\big). dW_{t}. 
\end{eqnarray*}
Using the identity $\gamma_{z}(t,F^*_{t})+U_{zz}(t,F^*_{t})\pi^*_{t}(F^*_{t})=\gamma^{\perp}_{z}(t,F^*_{t})-U_{z}(t,F^*_{t})\eta_{t}$, we  conclude that $U_{z}(t,F^*_{t})$ is solution of the SDE \eqref{eqU_{z}}.
 \end{proof}


\section{Application to dynamic CRRA utilities}\label{sec:exCRRA}
We provide the resolution in the  important example of  dynamic Constant Relative Risk Aversion (CRRA) utilities, also called power dynamic utilities, introduced in \ref{DefDynUtility}. We now assume that the aggregated utility $V$  of pensioners is given by  an aggregation of  individual power dynamic utilities $\bar v$, with  pensioners having the same relative risk aversion  $\theta$. \\
The social planner needs to infer a dynamic utility $U$ of the fund such that her preference criterion $(U,V)$ is consistent. 
It is then  natural to search the utility $U$ in the same class\footnote{If  $U$ is power then necessarily $V$ is power  with the same risk aversion coefficient, to ensure  the consistency  of the criterion $(U,V)$ (see \cite{MrNek07}). }.
Therefore the goal is to find a consistent CRRA utility $U$ (if it exists)  satisfying assumptions of Theorem \ref{ThOpt}: 
\begin{equation}
\label{PowUtilitiesShiftKbis}
U(t,z) = Z_t^u \frac{(z-X_t)^{1-\theta}}{1-\theta},
\end{equation}
where $Z^u$ is assumed to be a positive process with dynamics
\begin{equation}
d Z_t^u = Z_t^u( b_t dt + \delta_t \cdot d W_t ) , \quad \delta_t \in \R^n.
\end{equation} 
As discussed in Remark \ref{Rk:sureplication},  to take into account the  sustainability constraint, $X$ should be  taken as a buffer fund with pension rate $P^{min}$ and contribution  rate $C$,  and  that super-replicates (pathwise)  the sustainability bound $\mathfrak{K}$.  The dynamics of the shift $X$ is, according to Theorem \ref{SusCond}, 
\begin{equation}\label{BFdynamicex1}
dX_t = (X_t r_t + C_t -   P^{min}_t) dt + \delta^X_t \cdot (dW_t+\eta_{t}dt) ,  \quad \delta^X \in \sigR.
\end{equation}

\noindent {  The dynamics  of the  utility process  $U$  is deduced easily from that of $Z^{u}$ and $X$.
\begin{Lemma}
\label{LemmaDynPopUShiftK}
Denoting $d X_t=  \mu_t^X dt + \delta_t^X . dW_t$, the dynamics of the shift $X$, 
the dynamics of the CRRA shifted utility  $U(t,z) = Z_t^{u} \frac{(z-X_t)^{1-\theta}}{1-\theta}$   is 
$dU(t,z) = \beta(t,z) dt + \gamma(t,z) .dW_t,$
with local characteristics 
\begin{numcases}{}
 \beta(t,z) =  - U_z(t,z)(\mu_t^X + \delta_t \cdot \delta_t^X)+ \frac{1}{2}U_{zz}(t,z)\vert\vert \delta^X_t\vert\vert^2  + U(t,z)b_t  \nonumber \\
\gamma(t,z) = -U_z(t,z)\delta^X_t + U(t,z)\delta_t  \nonumber 
\end{numcases}{}
\noindent If the shift $X$ is  a buffer fund (with pension rate  $P^{min}$ and contribution  rate  $C$) then \\  $\mu_t^X = X_t r_t + C_t -   P^{min}_t + \delta^X_t. \eta_{t}$ and  $\delta^X  \in \sigR.$
\end{Lemma}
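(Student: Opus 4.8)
The plan is to treat the statement as a direct application of It\^o's product rule, since for each fixed $z$ the process $t \mapsto U(t,z) = Z_t^u\,g(X_t)$ is simply the product of the It\^o process $Z^u$ with the smooth image $g(X_t)$ of the It\^o process $X$, where $g(x) := \frac{(z-x)^{1-\theta}}{1-\theta}$. No appeal to the It\^o--Ventzel formula \eqref{ItoVentzel} is needed here, because we differentiate only in $t$ at a frozen $z$, so the random-field structure is not active and ordinary stochastic calculus suffices.

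First I would record the elementary derivatives $g'(x) = -(z-x)^{-\theta}$ and $g''(x) = -\theta\,(z-x)^{-\theta-1}$, and match them against the spatial derivatives of $U$, namely $U_z(t,z) = Z_t^u\,(z-X_t)^{-\theta}$ and $U_{zz}(t,z) = -\theta\, Z_t^u\,(z-X_t)^{-\theta-1}$. This gives the key identifications $Z_t^u\,g'(X_t) = -U_z(t,z)$ and $Z_t^u\,g''(X_t) = U_{zz}(t,z)$, which are precisely what let the final drift and diffusion be written intrinsically in terms of $U$, $U_z$, $U_{zz}$ rather than raw powers of $z-X_t$.

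Next I would apply It\^o's formula to $\psi_t := g(X_t)$ using $dX_t = \mu_t^X\,dt + \delta_t^X\cdot dW_t$, obtaining
\[
d\psi_t = g'(X_t)\,(\mu_t^X\,dt + \delta_t^X\cdot dW_t) + \demi\, g''(X_t)\,\|\delta_t^X\|^2\,dt,
\]
and then expand $U(t,z) = Z_t^u\psi_t$ by the product rule
\[
d(Z_t^u\psi_t) = \psi_t\,dZ_t^u + Z_t^u\,d\psi_t + d\langle Z^u,\psi\rangle_t.
\]
Substituting $dZ_t^u = Z_t^u(b_t\,dt + \delta_t\cdot dW_t)$ and using $\psi_t Z_t^u = U(t,z)$ turns the first term into $U(t,z)(b_t\,dt + \delta_t\cdot dW_t)$; the second term becomes $-U_z(t,z)(\mu_t^X\,dt + \delta_t^X\cdot dW_t) + \demi\, U_{zz}(t,z)\|\delta_t^X\|^2\,dt$ via the identifications above; and the cross-variation term, arising from the Brownian parts $Z_t^u\delta_t$ of $Z^u$ and $g'(X_t)\delta_t^X$ of $\psi$, equals $-U_z(t,z)\,\delta_t\cdot\delta_t^X\,dt$. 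Collecting the $dt$ coefficients yields exactly $\beta(t,z) = -U_z(t,z)(\mu_t^X + \delta_t\cdot\delta_t^X) + \demi\, U_{zz}(t,z)\|\delta_t^X\|^2 + U(t,z)b_t$, and collecting the $dW_t$ coefficients yields $\gamma(t,z) = -U_z(t,z)\delta_t^X + U(t,z)\delta_t$, as claimed.

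Finally, the closing sentence is a pure substitution: when $X$ is chosen as the buffer fund super-replicating $\mathfrak{K}$, its dynamics is given by \eqref{BFdynamicex1} (equivalently by Theorem \ref{SusCond}), $dX_t = (X_t r_t + C_t - P^{min}_t)\,dt + \delta_t^X\cdot(dW_t + \eta_t\,dt)$, so that reading off the drift gives $\mu_t^X = X_t r_t + C_t - P^{min}_t + \delta_t^X\cdot\eta_t$ with $\delta^X\in\sigR$. Since the whole argument is a routine stochastic calculus computation, there is no genuine obstacle; the only point requiring care is the bookkeeping that re-expresses the powers of $(z-X_t)$ through $U$, $U_z$, $U_{zz}$ and, in particular, not dropping the quadratic-covariation term $d\langle Z^u,\psi\rangle_t$, whose omission would corrupt the $\delta_t\cdot\delta_t^X$ contribution to the drift.
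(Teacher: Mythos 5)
Your proof is correct and coincides with the paper's, which simply states that the lemma follows from a straightforward application of the It\^o formula: your product-rule computation on $Z_t^u\,g(X_t)$ at frozen $z$, with the identifications $Z_t^u g'(X_t)=-U_z$ and $Z_t^u g''(X_t)=U_{zz}$, is exactly that application spelled out, and the closing substitution of the buffer-fund dynamics \eqref{BFdynamicex1} matches the paper as well.
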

\begin{proof}
The proof is a straightforward application of the It\^o Lemma. 
\end{proof}
\noindent We also provide some useful relations  for power  utilities and  their derivatives   of first  and second order and their    conjugate:
\begin{equation}\label{dualCRRA} 
	\left\lbrace
		\begin{aligned}
			(z - X_t) U_z(t,z)&=   Z^{u}_t  (z - X_t)^{1-\theta}= (1-\theta) U(t,z)  \\
			(z - X_t)^2U_{zz}(t,z)&= -\theta  (z - X_t)  U_z(t,z)  = -\theta (1-\theta) U(t,z)
		\end{aligned}
	\right.
\end{equation}

\noindent The first term $\frac{\theta}{1-\theta} (Z^{u}_t )^{(1/\theta)}   y^\frac{\theta-1}{\theta}$ of the dual  $\tU$  is standard, the second term $X_t y$ (linear in $y$) corresponds to  the  support function of the  convex domain $[ X_t, +\infty[ $.

\subsection{Age-independent pensions  } \label{calculsex1}
We start with Example \ref{Ex1}. Each pensioner receives the same pension  amount
$p_t=\rho_t  p_{t}^{min}$ (with $\rho \geq 1$),  and has the same dynamic power  utility $\bar v$,   
\begin{equation*}
{\bar v}(t,p) = Z_t \frac{(p-p^{min}_t)^{1-\theta}}{1-\theta}, 
\end{equation*}
where $Z$ is a positive adapted process.  The  social planner  aggregates preferences of  pensioners by weighting their utility  by  their  past contributions  $\alpha_c c_{t}(a)$, thus
\begin{equation}\label{PowUtilitiesShiftK}
V(t,\rho) =    \omega^r_t \bar v(s,\rho p^{min}_t) = \omega^r_t Z_t \; (p^{min}_t)^{1-\theta} \; \frac{(\rho-1)^{1-\theta}}{1-\theta}
\end{equation}
where  $\omega^r_t = \alpha_c   \int_{a_r}^{\infty}  c_{t}(a) n(t,a) da$ is the  total weight of all pensioners living at time $t$.

\subsubsection{Optimal policy}
\begin{Proposition} 
\label{OptimalStrategyCRRA}
Assume that  the shifted power utility   $U$ defined by \eqref{PowUtilitiesShiftKbis} is a consistent utility verifying assumptions of Theorem \ref{ThOpt}. Then  the optimal strategy is given by 
 \begin{numcases}{}
 \label{EqPiPow}
 \pi_t^* =\delta_t^{X} +  \frac{1}{\theta}(F_t^*- X_t) (\delta_t^{\mathcal{R}}+\eta_t) \\
p^{*}_t = p^{min}_t+  (F_t^*  - X_t) \left(\frac{Z_t \omega^r_t}{Z_t^u N^r_t}\right)^{\frac{1}{\theta}} 
\end{numcases}
and the optimal fund $F^*$ is solution of the dynamics
\begin{equation}\label{DynPuiF^*}
dF_{t}^*=d X_t + (F_t^* - X_t)  \left( \left(r_{t} {-}  (\frac{Z_t \omega^r_t}{Z_t^u N^r_t})^{\frac{1}{\theta}} \right)dt
+\frac{1}{\theta} (\delta_t^{\mathcal{R}}+\eta_t)  \cdot \big(dW_{t}+\eta_{t}dt\big)\right).
\end{equation}
\end{Proposition}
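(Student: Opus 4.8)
The plan is to specialize Theorem \ref{ThOpt} to the power form \eqref{PowUtilitiesShiftKbis}, using Lemma \ref{LemmaDynPopUShiftK} for the local characteristics and the identities \eqref{dualCRRA} to keep every computation in closed form. First I would record the derivatives $U_z(t,z)=Z_t^u(z-X_t)^{-\theta}$ and $U_{zz}(t,z)=-\theta Z_t^u(z-X_t)^{-\theta-1}$, so that \eqref{dualCRRA} yields the crucial ratio $U_z(t,z)/U_{zz}(t,z)=-(z-X_t)/\theta$. From Lemma \ref{LemmaDynPopUShiftK}, $\gamma(t,z)=-U_z(t,z)\delta_t^X+U(t,z)\delta_t$, hence $\gamma_z(t,z)=-U_{zz}(t,z)\delta_t^X+U_z(t,z)\delta_t$; since $\delta^X\in\sigR$ its orthogonal projection is itself, giving $\gamma_z^{\mathcal R}(t,z)=-U_{zz}(t,z)\delta_t^X+U_z(t,z)\delta_t^{\mathcal R}$.

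Substituting into \eqref{pi*} and cancelling $U_{zz}$ gives
$$\pi_t^*(z)=\delta_t^X-\frac{U_z(t,z)}{U_{zz}(t,z)}\big(\delta_t^{\mathcal R}+\eta_t\big)=\delta_t^X+\frac{z-X_t}{\theta}\big(\delta_t^{\mathcal R}+\eta_t\big),$$
which at $z=F_t^*$ is exactly \eqref{EqPiPow}. Next I would invert the aggregated marginal utility: differentiating \eqref{PowUtilitiesShiftK} gives $V_\rho(t,\rho)=\omega_t^r Z_t(p_t^{min})^{1-\theta}(\rho-1)^{-\theta}$, and solving $V_\rho(t,\rho)=P_t^{min}U_z(t,z)$ with $P_t^{min}=N_t^r p_t^{min}$ produces $\rho_t^f(z)-1=\frac{z-X_t}{p_t^{min}}\big(Z_t\omega_t^r/(Z_t^uN_t^r)\big)^{1/\theta}$, whence $p_t^*(z)=p_t^{min}\rho_t^*(z)=p_t^{min}+(z-X_t)\big(Z_t\omega_t^r/(Z_t^uN_t^r)\big)^{1/\theta}$ after evaluation at $F_t^*$.

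A point that genuinely must be verified here is that the minimum-guarantee floor $\vee 1$ in \eqref{p*} is inactive. Theorem \ref{ThOpt}, applied with $X$ playing the role of the domain shift $\mathfrak{K}$, guarantees $F_t^*>X_t$ for all $t$; since $Z_t,Z_t^u,\omega_t^r,N_t^r,p_t^{min}$ are all strictly positive, this forces $\rho_t^f(F_t^*)>1$, so $\rho_t^*=\rho_t^f$ and the explicit inversion is licit. Without this strict inequality (i.e.\ the non-explosion of $U_z(t,F_t^*)$ established in the proof of Theorem \ref{ThOpt}) the closed forms would not be meaningful, so I regard this as the one non-mechanical ingredient of the argument, even though the work is already done upstream.

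Finally I would feed $\pi_t^*$ and $\rho_t^*$ into the optimal-fund dynamics \eqref{dynF*}, namely $dF_t^*=\mu_t^*(F_t^*)\,dt+\pi_t^*(F_t^*)\cdot dW_t$. Using $P_t^{min}\rho_t^*=N_t^rp_t^*$ to rewrite the pension outflow and grouping the $r_t$, contribution, minimal-pension and $\delta_t^X$ terms against the buffer-fund dynamics \eqref{BFdynamicex1}, the increment $dF_t^*-dX_t$ collapses into a multiple of $(F_t^*-X_t)$, which is \eqref{DynPuiF^*}; the diffusion part matches at once because $\pi_t^*-\delta_t^X=\frac{1}{\theta}(F_t^*-X_t)(\delta_t^{\mathcal R}+\eta_t)$. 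The computations are otherwise routine algebra, and the only bookkeeping I expect to require care is the orthogonal decomposition of $\delta_t$ and $\eta_t$ relative to $\sigR$ when collecting the drift, the structural content ($\delta^X\in\sigR$ and the sustainability/consistency of $X$) being already supplied by Theorems \ref{SusCond} and \ref{ThOpt}.
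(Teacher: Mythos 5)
Your route is the paper's route, step for step: you compute $U_z,U_{zz}$, extract $\gamma_z^{\mathcal R}$ from Lemma \ref{LemmaDynPopUShiftK} (correctly noting $\delta^X\in\mathcal{R}$ so it survives the projection), cancel $U_{zz}$ in \eqref{pi*}, invert $V_\rho$ against $P_t^{min}U_z(t,F_t^*)$ with $P_t^{min}=N_t^rp_t^{min}$, and substitute into the fund dynamics. Your handling of the floor $\vee 1$ is also consistent with the paper, which disposes of it by the remark that the shift of $\bar v$ sits exactly at $p_t^{min}$, so $\rho^f_t(z)=1+\frac{z-X_t}{p_t^{min}}\bigl(\frac{Z_t\omega_t^r}{Z_t^uN_t^r}\bigr)^{1/\theta}\geq 1$ automatically for all $z\geq X_t$; your strict version via $F_t^*>X_t$ from Theorem \ref{ThOpt} is a fine (marginally stronger) way to say the same thing.

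There is, however, one point where you asserted rather than computed, and it matters: you claim the increment $dF_t^*-dX_t$ "collapses into a multiple of $(F_t^*-X_t)$, which is \eqref{DynPuiF^*}." The collapse is right, but the coefficient in \eqref{DynPuiF^*} as printed is not what the algebra gives. The pension outflow is $\rho_t^*P_t^{min}=P_t^{min}+N_t^r(F_t^*-X_t)\bigl(\frac{Z_t\omega_t^r}{Z_t^uN_t^r}\bigr)^{1/\theta}$, so subtracting \eqref{BFdynamicex1} yields
\begin{equation*}
d(F_t^*-X_t)=(F_t^*-X_t)\left[\Bigl(r_t-N_t^r\Bigl(\tfrac{Z_t\omega_t^r}{Z_t^uN_t^r}\Bigr)^{\frac{1}{\theta}}\Bigr)dt+\tfrac{1}{\theta}\bigl(\delta_t^{\mathcal{R}}+\eta_t\bigr)\cdot\bigl(dW_t+\eta_t dt\bigr)\right],
\end{equation*}
i.e.\ drift coefficient $r_t-(N_t^r)^{1-\frac{1}{\theta}}\bigl(\frac{Z_t\omega_t^r}{Z_t^u}\bigr)^{1/\theta}$, not $r_t+\bigl(\frac{Z_t\omega_t^r}{Z_t^uN_t^r}\bigr)^{1/\theta}$: both the sign and the factor $N_t^r$ in the display \eqref{DynPuiF^*} are misprints in the statement (the paper's own later proof of Theorem \ref{ThResult} uses exactly the corrected coefficient $r_s-(N_s^r)^{1-\frac{1}{\theta}}\bigl(\frac{Z_s\omega_s^r}{Z_s^u}\bigr)^{1/\theta}$ in the exponential representation of $F_t^*-X_t$). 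Had you actually performed the "routine algebra" you deferred, you would have caught this; as written, your final sentence certifies agreement with an incorrect display. The method is sound and identical to the paper's — just replace the claimed identification with the computation above.
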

%
\noindent Note that due to  the form of the  pension utility $\bar v$, $\rho^f_t$  defined by \eqref{pf} is always greater than the bound 1 and thus  $\rho_t^*=\rho^f_t$  a.s.   \\
The optimal strategy has a particular additive form.
The optimal pension is the minimal pension $p_{t}^{min}$ plus an additional term that is proportional to the "cushion" $(F_t^*- X_t)$, where we recall that $X$ is a buffer fund receiving the contribution rate $C$ and paying the minimal pension amount $P^{min}$, which (sur)replicates the sustainability bound $\mathfrak{K}$. The proportionality factor $\left(\frac{Z_t \omega^r_t}{Z_t^u N^r_t} \right)^{\frac{1}{\theta}} $ is decreasing in the risk aversion parameter $\theta$.
The ratio  $\frac{ \omega^r_t}{ N^r_t}= \frac{  \int_{a_r}^\infty  \alpha_c c_{t}(a)  n(t,a)da }{ \int_{a_r}^\infty  n(t,a)da }$ represents  the  average past contributions of one pensioner at time $t$. With this choice of weight $w^r$, the more pensioners contributed during their working period, the higher their pension will be. This can be interpreted as some kind of actuarial fairness. \\
The ratio $\frac{Z_t }{Z_t^u }$ represents the relative importance of the pensioners utility with respect to the buffer fund utility, which represents future generations. This can be interpreted as an ``intergenerational risk sharing" coefficient. If $\frac{Z_t }{Z_t^u }>>1$, current pensioners will receive higher pensions, while   the fund return rate will decrease, potentially putting the  pension system's sustainability at risk  (see Theorem \ref{ThResult}).  On the other hand, if $\frac{Z_t }{Z_t^u }<<1$, then future pensioners will benefit from a higher fund value. \\
The optimal portfolio  is the  portfolio $\delta^X$   of buffer fund shift $X$  plus an additional term  proportional to the ``cushion" $(F_t^*- X_t)$. Again the  proportionality factor 
 $\frac{(\delta_t^{\mathcal{R}}+\eta_t)}{\theta}$ is decreasing in the risk aversion parameter $\theta$. Note that the optimal strategy also depends on the volatility of buffer fund utility. 
\begin{proof} By a simple derivation with respect to $z$ and $\rho$, we have 
$U_z(t,z) = Z_t^u (z-X_t)^{-\theta}$
 and $ V_\rho(t,\rho) = Z_t\omega_{s}^rp_{t}^{min}(p_{t}^{min}(\rho-1))^{-\theta} $
so that  ${p_{t}^{min} V_\rho^{-1}(t,y) = p_{t}^{min}+  \left(\frac{y}{Z_t\omega_{s}^rp_{t}^{min}}\right)^{-\frac{1}{\theta}}}.$
Thus, applying Theorem \ref{ThOpt} 
\begin{align*}
 p^*_t & = p_{t}^{min} \, V_\rho^{-1}(t, P_{t}^{min}U_z(t,F_t^*))\\
 & =p_{t}^{min}  \,  V_\rho^{-1}(t, P_{t}^{min}Z_t^u(F_t^*-X_t)^{-\theta} )\\
 & =p_{t}^{min}+ \left(\frac{Z_t^u(F_t^*-X_t)^{-\theta} P_{t}^{min}}{Z_t \omega^r_tp_{t}^{min}}\right)^{-\frac{1}{\theta}}\\
 & =p_{t}^{min}+ \left(\frac{Z_t \omega^r_t}{Z_t^u N^r_t} \right)^{\frac{1}{\theta}} (F_t^* -X_t). 
\end{align*}
where we used  $P_{t}^{min}=p_{t}^{min}N_t^r$ for the last equality. In addition,  the optimal investment strategy is given by 
\begin{align*}
\pi^{*}_t & =-\frac{\gamma^{\mathcal R}_{x}(t,F_t^*)+U_{z}(t,F_t^*) \eta_t}{U_{zz}(t,F_t^*)} \\
 & = -\frac{-U_{zz}(t,F_t^*)\delta_t^{X}+ U_z(t,F_t^*)(\delta_t^{\mathcal{R}}+\eta_t) }{U_{zz}(t,F_t^*)}\\
 & =  \delta_t^{X} + \frac{1}{\theta}(F_t^*- X_t) (\delta_t^{\mathcal{R}}+\eta_t) 
\end{align*}
where we used  Lemma \ref{LemmaDynPopUShiftK}. Plugging this strategy into the buffer fund dynamics  \eqref{EqFond}    yields \begin{eqnarray*}
dF_{t}^*&=&\Big(F_{t}^*r_{t}+C_{t}-N^r_t p^{min}_t \left( 1+ \left(\frac{Z_t \omega^r_t}{Z_t^u N^r_t}\right)^{\frac{1}{\theta}} (F_t^* - X_t)\right)\Big)dt\nonumber\\
&+&\big(\frac{1}{\theta}(F_t^*- X_t) (\delta_t^{\mathcal{R}}+\eta_t) +  \delta_t^{X}\big).\big(dW_{t}+\eta_{t}dt\big).
\end{eqnarray*} which implies \eqref{DynPuiF^*}  by  \eqref{BFdynamicex1}.
\end{proof}
 }

\subsubsection{Consistency}
 The next step consists in verifying  the assumptions of Theorem \ref{ThOpt} and in particular 
the consistency condition, that  translates into condition on $Z^u$. 
\begin{Proposition}
\label{PropEqZu}
Let $V$ be the aggregated  pensioners utility  \eqref{PowUtilitiesShiftK} and $U(t,z) = Z_t^u \frac{(z-X_t)^{1-\theta}}{1-\theta}$.
If  $(U,V)$ is consistent then  $Z^u$ must be  a well-defined solution  to the SDE:
{\fontsize{10.45}{6} \begin{equation}\label{Zu}
 -dZ^u_t = Z^u_t \left( \left( (1-\theta)r_t +   \frac{(1-\theta)}{2\theta}   \vert \vert { \delta_t^{\mathcal{R}}+ \eta_t}\vert \vert^{2}+    \theta (Z_t \omega_t^r)^{\frac{1}{\theta}  }  ({N^r_t })^{\frac{\theta-1}{\theta}} ({ Z_t^u})^{\frac{-1}{\theta}}  \right)dt -  \delta_t. dW_t \right).
 \end{equation}}
\end{Proposition}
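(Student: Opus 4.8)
The plan is to read off the drift of $Z^u$ by forcing the It\^o drift of the CRRA utility $U(t,z)=Z^u_t(z-X_t)^{1-\theta}/(1-\theta)$ to coincide with the drift prescribed by the consistency HJB constraint \eqref{HJB-Beta}. Concretely, Lemma \ref{LemmaDynPopUShiftK} gives one expression for $\beta(t,z)$ in terms of the unknown characteristics $(b_t,\delta_t)$ of $Z^u$ and the (known) characteristics of the buffer-fund shift $X$, while \eqref{HJB-Beta} gives a second expression in terms of $U_z,U_{zz}$, the optimal controls $\pi^*,\rho^*$ and $V$. Equating the two and isolating the only term carrying $b_t$, namely $U(t,z)\,b_t$, yields an explicit formula for $b_t$; since $-dZ^u_t=Z^u_t\big((-b_t)\,dt-\delta_t\cdot dW_t\big)$, this is precisely the drift appearing in \eqref{Zu}.

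First I would substitute the closed forms of the CRRA primitives and controls: from \eqref{PowUtilitiesShiftKbis} one has $U_z=Z^u_t(z-X_t)^{-\theta}$ and $U_{zz}=-\theta Z^u_t(z-X_t)^{-\theta-1}$, while Proposition \ref{OptimalStrategyCRRA} supplies $\pi^*_t=\delta^X_t+\tfrac1\theta(z-X_t)(\delta^{\mathcal R}_t+\eta_t)$ and $p^*_t=p^{min}_t+(z-X_t)(Z_t\omega^r_t/(Z^u_tN^r_t))^{1/\theta}$, so that $P^{min}_t(\rho^*_t-1)=N^r_t(z-X_t)(Z_t\omega^r_t/(Z^u_tN^r_t))^{1/\theta}$ (the constraint $\rho\ge1$ being inactive here, as noted after Proposition \ref{OptimalStrategyCRRA}, so $\rho^*=\rho^f$). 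The decisive structural observation is that, after using the duality relations \eqref{dualCRRA}, namely $(z-X_t)U_z=(1-\theta)U$ and $(z-X_t)^2U_{zz}=-\theta(1-\theta)U$, every surviving term becomes proportional either to $U(t,z)$ or to $U_z(t,z)$ with $z$-independent coefficients. In particular both the gain term $U_zP^{min}_t(\rho^*-1)$ and the aggregate utility $V(t,\rho^*)$ collapse to a multiple of $U$, jointly contributing the demographic factor $(Z_t\omega^r_t)^{1/\theta}(N^r_t)^{(\theta-1)/\theta}(Z^u_t)^{-1/\theta}$, while the curvature term $\tfrac12 U_{zz}\|\pi^*\|^2$, after expanding the square and invoking \eqref{dualCRRA}, produces the $r_t$-free contribution $-\tfrac{1-\theta}{2\theta}U\|\delta^{\mathcal R}_t+\eta_t\|^2$ together with a cross term proportional to $U_z$.

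The step I expect to require the most care is the cancellation of the $U_z$-proportional terms, since this is exactly what makes $b_t$ genuinely $z$-independent and hence well defined. Collecting the first-order pieces from the two drift expressions together with the cross term in $\tfrac12U_{zz}(\|\pi^*\|^2-\|\delta^X\|^2)$ leaves precisely $U_z\,\delta^X_t\cdot(\delta_t-\delta^{\mathcal R}_t)=U_z\,\delta^X_t\cdot\delta^\perp_t$; this vanishes because Theorem \ref{SusCond} forces $\delta^X\in\mathcal R$, so $\delta^X_t$ is orthogonal to the orthogonal component $\delta^\perp_t$. Once these terms drop out, the $U$-proportional remainder divides cleanly by $U(t,z)$ to give $b_t=-(1-\theta)r_t-\tfrac{1-\theta}{2\theta}\|\delta^{\mathcal R}_t+\eta_t\|^2-\theta(Z_t\omega^r_t)^{1/\theta}(N^r_t)^{(\theta-1)/\theta}(Z^u_t)^{-1/\theta}$, and passing to $-dZ^u_t$ recovers \eqref{Zu} exactly. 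I would close by remarking that the diffusion coefficient $\delta_t$ of $Z^u$ remains free at this stage, consistent with its appearing undetermined in \eqref{Zu}, and that the qualifier ``well-defined solution'' refers to \eqref{Zu} admitting a positive non-exploding solution, a point to be addressed separately from this drift identification.
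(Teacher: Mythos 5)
Your proposal is correct and follows essentially the same route as the paper: equate the drift from Lemma \ref{LemmaDynPopUShiftK} (with the buffer-fund dynamics of $X$, $\delta^X\in\mathcal{R}$) against the HJB drift \eqref{HJB-Beta} rewritten via Proposition \ref{OptimalStrategyCRRA} and the CRRA identities \eqref{dualCRRA}, then read off $b_t$ from the $U$-proportional remainder. Your computation checks out — including the joint collapse of $U_zP^{min}_t(\rho^*_t-1)$ and $V(t,\rho^*_t)$ into $-\theta N^r_t\big(\tfrac{Z_t\omega^r_t}{Z^u_tN^r_t}\big)^{1/\theta}U$ and the vanishing of $U_z\,\delta^X_t\cdot\delta^\perp_t$, a cancellation the paper uses only implicitly when it writes $\delta^X_t\cdot(\eta_t+\delta_t)$ on both sides — and your closing remarks (that $\delta_t$ stays free and that well-posedness of \eqref{Zu} is deferred, here to Proposition \ref{Proptau}) match the paper.
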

\begin{proof}
One the one hand, by application of Lemma \ref{LemmaDynPopUShiftK}, the buffer fund dynamic utility  $U$ has the following dynamics
{\fontsize{10.45}{6}
\begin{equation}\label{beta+crra} 
	\left\lbrace
		\begin{aligned}
			&dU(t,z) = \beta(t,z) dt + \gamma(t,z) dW_t,   \\
			&\beta(t,z) =  - U_z(t,z)\big(X_tr_t + C_t -   P^{min}_t+ \delta^X_t \cdot ( \eta_{t} + \delta_t) \big)+ \frac{1}{2}U_{zz}(t,z)\vert\vert \delta^X_t\vert\vert^2  + U(t,z)b_t\\
			&\gamma(t,z) = -U_z(t,z)\delta^X_t + U(t,z)\delta_t.
		\end{aligned}
	\right.
\end{equation}
}
On the other hand,  if $(U,V)$ is consistent  the  consistency constraint \eqref{HJB-Beta} should be satisfied:
\begin{equation*}
 \beta(t,z) =  -U_{z}(t,z) \big(zr_t+C_t- P^{min}_t\rho_t^*(z)\big)
+\demi U_{zz}(t,z)\| \pi_t^*(z)\|^{2}-V(t, \rho_t^*(z)). 
\end{equation*}
By the expression of optimal controls given in Proposition \ref{OptimalStrategyCRRA},  the elementary identities \eqref {dualCRRA} and using   
\eqref{PowUtilitiesShiftK} and   \eqref{EqPiPow}
we can rewrite the consistency constraint \eqref{HJB-Beta} as 
 \begin{align*}
 \beta(t,z) & = - U_z(t,z)\big(zr_t + C_t -   P^{min}_t + \delta^X_t \cdot ( \eta_{t} + \delta_t) \big)+ \frac{1}{2}U_{zz}(t,z)\vert\vert \delta^X_t\vert\vert^2  \\
 & + U(t,z)\left((1-\theta) N_t^r (\frac{Z_t\omega_t^r}{Z^u_t N_t^r})^{\frac{1}{\theta}} - \frac{(1-\theta)}{2 \theta} \|\delta^{\mathcal{R}} + \eta_t \|^2 - \omega_t^r \frac{Z_t}{Z_t^u}(\frac{Z_t\omega_t^r}{Z^u_t N_t^r})^{\frac{1}{\theta}-1}\right).
\end{align*}
Identifying this with \eqref{beta+crra}, we obtain 
\begin{eqnarray*}
- U_z(t,z) X_tr_t +  U(t,z)b_t =  - U_z(t,z) zr_t &\\
&\hspace*{-6cm}+U(t,z)\left((1-\theta) N_t^r (\frac{Z_t\omega_t^r}{Z^u_t N_t^r})^{\frac{1}{\theta}} 
 - \frac{(1-\theta)}{2\theta} \|\delta^{\mathcal{R}} + \eta_t \|^2 -\omega_t^r \frac{Z_t}{Z_t^u}(\frac{Z_t\omega_t^r}{Z^u_t N_t^r})^{\frac{1}{\theta}-1}\right)
\end{eqnarray*}
which implies that the drift $b_t$ of $Z^u$ must satisfy
\begin{equation*}
b_t = - (1-\theta)r_t  - \frac{(1-\theta)}{2\theta} \|\delta^{\mathcal{R}} + \eta_t \|^2 - \theta(N_t^r)^{1-\frac{1}{\theta}} \big(\frac{\omega_t^rZ_t}{Z^u_t}\big)^{\frac{1}{\theta}} .
\end{equation*}
\end{proof}

\noindent    SDE  \eqref{Zu} is only defined on $Z^u >0$, due to the term $(Z_t^u)^{-\frac{1}{\theta}}$ in the drift.  %
\begin{Proposition}
\label{Proptau}
Let $\xi$ be the process defined by 
\begin{equation*}
\xi_t = \exp\left(\int_0^t \big((1-\theta)r_t + \frac{(1-\theta)}{2\theta}   \vert \vert { \delta_t^{\mathcal{R}}+ \eta_t}\vert \vert^{2} +\frac{\|\delta_s\|^2}{2}\big)ds - \int_0^t \delta_s\cdot  dW_s\right)
\end{equation*}
There exists a unique solution of the SDE \eqref{Zu}, defined by 
\begin{equation}
\label{exprZu}
Z_t^u = \xi_t^{-1}\left(Z_0^{\frac{1}{\theta}} -  \int_0^t (N_s^r)^{1-\frac{1}{\theta}}(Z_s \omega_s^r\xi_s)^{\frac{1}{\theta}} ds \right)^{\theta}, \quad \forall t \in [0,\tau ^Z[, 
\end{equation}
where $\tau^Z$ the first hitting time of 0 satisfies
\begin{equation}\label{deftau}
\tau^Z = \inf \{ t \; ; \;  \int_0^t (N_s^r)^{1-\frac{1}{\theta}} (\frac{Z_s}{ Z_0}\omega_s^r \xi_s )^{\frac{1}{\theta}}ds  \geq1 \}.
\end{equation}
\end{Proposition}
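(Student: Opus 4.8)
The plan is to recognise \eqref{Zu} as a Bernoulli-type stochastic differential equation and to remove its nonlinearity by a power change of variable. Writing $a_t := (1-\theta)r_t + \frac{1-\theta}{2\theta}\|\delta_t^{\mathcal R}+\eta_t\|^2$ and $h_t := (N_t^r)^{1-1/\theta}(Z_t\omega_t^r)^{1/\theta}$, the equation reads $dZ^u_t = -Z^u_t a_t\,dt - \theta\,h_t\,(Z^u_t)^{1-1/\theta}\,dt + Z^u_t\,\delta_t\cdot dW_t$, so that the only source of nonlinearity is the factor $(Z^u_t)^{1-1/\theta}$. First I would introduce $X_t := (Z^u_t)^{1/\theta}$; since $\theta\in(0,1)$ the map $x\mapsto x^{1/\theta}$ is a smooth bijection of $(0,\infty)$ onto itself, so on $\{Z^u>0\}$ the processes $Z^u$ and $X$ carry exactly the same information and either one determines the other.

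Applying It\^o's formula to $X_t=(Z^u_t)^{1/\theta}$ and using $d\langle Z^u\rangle_t=(Z^u_t)^2\|\delta_t\|^2\,dt$, the nonlinear drift term collapses to the forcing term $-h_t\,dt$ (the powers of $Z^u$ cancel exactly), while the remaining linear drift and the It\^o correction combine into $\alpha_t X_t\,dt$. One is left with the \emph{linear} SDE $dX_t = X_t(\alpha_t\,dt+\tfrac1\theta\delta_t\cdot dW_t) - h_t\,dt$, with $\alpha_t = -\frac{a_t}{\theta}+\frac{1}{2\theta}(\frac1\theta-1)\|\delta_t\|^2$. This is the crucial simplification: a singular nonlinear equation has been replaced by an inhomogeneous linear one whose coefficients are the given adapted processes $r,\delta,\delta^{\mathcal R},\eta,Z,\omega^r,N^r$, and which therefore possesses a unique strong solution by variation of constants.

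Next I would solve this linear equation explicitly. Let $\mathcal E_t$ solve $d\mathcal E_t=\mathcal E_t(\alpha_t\,dt+\tfrac1\theta\delta_t\cdot dW_t)$; computing $\alpha_t-\frac12\|\tfrac1\theta\delta_t\|^2=-\frac1\theta(a_t+\frac12\|\delta_t\|^2)$ shows that $\mathcal E_t=\xi_t^{-1/\theta}$, with $\xi$ precisely the process of the statement. Variation of constants then gives $X_t=\mathcal E_t\big(X_0-\int_0^t\mathcal E_s^{-1}h_s\,ds\big)=\xi_t^{-1/\theta}\big(Z_0^{1/\theta}-\int_0^t(N_s^r)^{1-1/\theta}(Z_s\omega_s^r\xi_s)^{1/\theta}\,ds\big)$, where $X_0=(Z^u_0)^{1/\theta}$ equals $Z_0^{1/\theta}$ under the initial normalisation $Z^u_0=Z_0$ adopted in the statement, and $\mathcal E_s^{-1}h_s=\xi_s^{1/\theta}h_s=(N_s^r)^{1-1/\theta}(Z_s\omega_s^r\xi_s)^{1/\theta}$. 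Raising to the power $\theta$ recovers \eqref{exprZu}.

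Finally I would read off the validity interval and conclude uniqueness. Since $\xi_t^{-1/\theta}>0$, the solution $X_t$, and hence $Z^u_t=X_t^\theta$, stays strictly positive exactly as long as the bracket $Z_0^{1/\theta}-\int_0^t(N_s^r)^{1-1/\theta}(Z_s\omega_s^r\xi_s)^{1/\theta}\,ds$ is positive; dividing through by $Z_0^{1/\theta}$ identifies this absorption time with $\tau^Z$ of \eqref{deftau}. Uniqueness of $Z^u$ on $[0,\tau^Z[$ then follows from uniqueness for the linear SDE together with the bijectivity of $x\mapsto x^{1/\theta}$: any positive solution of \eqref{Zu} is carried by It\^o's formula to a solution of the linear equation, which is unique. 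The main obstacle is conceptual rather than computational: the drift of \eqref{Zu} is \emph{not} Lipschitz and in fact blows up as $Z^u\downarrow0$, so neither existence nor uniqueness can be quoted off the shelf; the power substitution is exactly what dissolves the singularity, and the only remaining technical point is to verify, from the standing integrability hypotheses on $r,\eta,\delta$ and the continuity/boundedness of $N^r,\omega^r,Z$, that $\int_0^t\xi_s^{1/\theta}h_s\,ds<\infty$ for $t<\tau^Z$, so that the representation \eqref{exprZu} is well posed.
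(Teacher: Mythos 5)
Your proof is correct and is essentially the paper's argument: both linearize the Bernoulli-type SDE \eqref{Zu} using the same integrating factor $\xi$ and the power map $x\mapsto x^{1/\theta}$, merely in the opposite order (the paper first sets $\bar Z_t=Z^u_t\xi_t$ to kill the martingale part and then integrates the resulting pathwise Bernoulli ODE, whereas you take $X_t=(Z^u_t)^{1/\theta}$ first and then apply variation of constants with $\mathcal E_t=\xi_t^{-1/\theta}$). Your computations of $\alpha_t$, of $\mathcal E_t=\xi_t^{-1/\theta}$, and of the absorption time $\tau^Z$ all match the paper's formulas, so nothing is missing.
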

\begin{proof}
Let $Z^u$ be a solution of \eqref{Zu} on $[0,\tau[$, with $\tau = \inf\{ t\geq 0 ; \; Z_t^u =0\}$, and denote
\begin{align*}
\tilde{r}_t := (1-\theta)r_t + \frac{(1-\theta)}{2\theta}   \vert \vert { \delta_t^{\mathcal{R}}+ \eta_t}\vert \vert^{2},\quad \text{ and }  \tilde \beta_t := (N_t^r)^{1-\frac{1}{\theta}}(Z_t \omega^r_t)^{\frac{1}{\theta}}, 
\end{align*}
so that 
\begin{equation*}
dZ_t^u = -\big(Z_t^u\tilde{r}_t  +\theta  \tilde \beta_t (Z_t^u)^{1-\frac{1}{\theta}}\big)dt   + Z_t^u \delta_t \cdot dW_t .
\end{equation*}
This equation may be simplified by the following change of variables: 
\begin{equation*}
\bar{Z}_t := Z_t^u\exp\left(\int_0^t \big(\tilde r_s +\frac{\|\delta_s\|^2}{2}\big)ds - \int_0^t \delta_s\cdot  dW_s\right) =  Z_t^u \xi_t
\end{equation*}
Then,
\begin{equation*}
\frac{1}{\theta}(\bar{Z}_t)^{\frac{1}{\theta}-1}d\bar{Z}_t = - \beta_t \xi_t^{\frac{1}{\theta}}dt, 
\end{equation*}
i.e 
\begin{equation*}
(\bar{Z}_t)^{\frac{1}{\theta}} - (\bar{Z_0})^{\frac{1}{\theta}} = - \int_0^t \beta_s\xi_s^{\frac{1}{\theta}} ds, 
\end{equation*}
and thus
\begin{equation*}
Z_t^u = \xi_t^{-1}\left(Z_0^{\frac{1}{\theta}} -  \int_0^t (N_s^r)^{1-\frac{1}{\theta}}(Z_s\omega_s^r \xi_s)^{\frac{1}{\theta}} ds \right)^{\theta}, 
\end{equation*}
and $\tau^Z$ verifies \eqref{deftau}.
\end{proof}
\noindent The necessary condition for $Z^u$ given in Proposition \ref{PropEqZu} is actually a sufficient condition. 

\subsubsection{Existence of the optimal investment/pension policy}
\begin{Theorem}
\label{ThResult}
Let $Z^u$ be defined as in Proposition \ref{Proptau}. Then the shifted power dynamic utilities $(U,V)$ given by \eqref{PowUtilitiesShiftKbis}-\eqref{PowUtilitiesShiftK} are consistent on $[0,\tau^Z[$, and verify the assumption of Theorem \ref{ThOpt} on this interval. Therefore the optimal strategy is given on $[0,\tau^Z[$ by 
 \begin{numcases}{}
\nonumber \pi_t^* =  \delta_t^{X} +  \frac{1}{\theta}(F_t^*- X_t) (\delta_t^{\mathcal{R}}+\eta_t) \\
\nonumber p^{*}_t = p^{min}_t +   (F_t^*  - X_t) \left(\frac{Z_t \omega^r_t}{Z_t^u N^r_t}\right)^{\frac{1}{\theta}}.
\end{numcases}
Furthermore, $\tau^Z  = \inf \{t \geq 0 ; \; F^*_t = X_t \}$. 
\end{Theorem}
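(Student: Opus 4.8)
The plan is as follows. Proposition \ref{PropEqZu} shows that the consistency constraint \eqref{HJB-Beta} on the drift of $U$ is equivalent to $Z^u$ solving the SDE \eqref{Zu}, and Proposition \ref{Proptau} produces precisely such a solution on the stochastic interval $[0,\tau^Z[$. Hence \eqref{HJB-Beta} holds on $[0,\tau^Z[$ by construction, and it only remains to check that the pair $(U,V)$ of \eqref{PowUtilitiesShiftKbis}--\eqref{PowUtilitiesShiftK} satisfies the two standing hypotheses of Theorem \ref{ThOpt}, namely Assumptions \ref{HypBarU} and \ref{hypVbis}, on this interval. Once this is done, Theorem \ref{ThOpt} applies and yields both the consistency of $(U,V)$ and the optimality of the strategy, whose explicit form was already computed in Proposition \ref{OptimalStrategyCRRA}. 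The last and most delicate point is then the identification $\tau^Z=\inf\{t\ge0;\,F^*_t=X_t\}$.

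For Assumption \ref{HypBarU} I would work with the shifted utility $\bar U(t,z)=U(t,z+X_t)=Z^u_t\,z^{1-\theta}/(1-\theta)$, whose local characteristics are $\bar\beta(t,z)=b_t\,\bar U(t,z)$ and $\bar\gamma(t,z)=\delta_t\,\bar U(t,z)$, read off from $dZ^u_t=Z^u_t(b_t\,dt+\delta_t\cdot dW_t)$. Differentiating gives $\bar\beta_z=b_t\bar U_z$, $\bar\beta_{zz}=b_t\bar U_{zz}$, $\bar\gamma_z=\delta_t\bar U_z$ and $\bar\gamma_{zz}=\delta_t\bar U_{zz}$, so \eqref{boundenzero0}--\eqref{boundenzero} hold with $B^1_t=|b_t|$ and $\zeta_t=\|\delta_t\|$. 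Local integrability of $B^1$ and $\zeta$ on $[0,\tau^Z[$ follows from the explicit drift $b_t$ obtained in Proposition \ref{PropEqZu} together with $Z^u_t>0$ there and the standing integrability of $r$, $\eta$ and the demographic coefficients.

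Assumption \ref{hypVbis} I would verify by computing the dual marginals explicitly. From $U(t,z)=Z^u_t(z-X_t)^{1-\theta}/(1-\theta)$ and \eqref{PowUtilitiesShiftK} one finds, in the dual variable $y$, that $\tilde U_z(t,y)=-U_z^{-1}(t,y)=-X_t-(Z^u_t)^{1/\theta}y^{-1/\theta}$ and $\tilde V_\rho(t,y)=-V_\rho^{-1}(t,y)=-1-(\omega^r_tZ_t(p^{min}_t)^{1-\theta})^{1/\theta}y^{-1/\theta}$. Both are affine in $y^{-1/\theta}$, so their increments are proportional with ratio $B_t=(\omega^r_tZ_t(p^{min}_t)^{1-\theta}/Z^u_t)^{1/\theta}$, which is finite and locally integrable on $[0,\tau^Z[$; this gives \eqref{HypTildeV}. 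With Assumptions \ref{HypBarU}--\ref{hypVbis} and \eqref{HJB-Beta} in force, Theorem \ref{ThOpt} shows that $(\pi^*,\rho^*)$ of Proposition \ref{OptimalStrategyCRRA} is optimal, that $F^*_t>X_t$ on $[0,\tau^Z[$, and that $(U,V)$ is consistent there.

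Finally, for $\tau^Z=\inf\{t\ge0;\,F^*_t=X_t\}$, the key object is the optimal state-price density $Y^*_t=U_z(t,F^*_t)=Z^u_t(F^*_t-X_t)^{-\theta}$. Since $\delta^X\in\mathcal R$ by \eqref{BFdynamicex1}, its orthogonal part vanishes and $\gamma_z^\perp(t,F^*_t)=U_z(t,F^*_t)\delta_t^\perp=Y^*_t\delta_t^\perp$, so the linear SDE of Proposition \ref{MargU}(ii) collapses to the autonomous equation $dY^*_t=Y^*_t\big(-r_t\,dt+(\delta_t^\perp-\eta_t)\cdot dW_t\big)$, whose unique solution is strictly positive and finite at every finite time (being the product of a stochastic exponential and the discount factor $e^{-\int_0^t r_s\,ds}$). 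Inverting the definition of $Y^*$ yields the explicit cushion $F^*_t-X_t=(Z^u_t/Y^*_t)^{1/\theta}$ on $[0,\tau^Z[$: it is strictly positive as long as $Z^u_t>0$ and tends to $0$ as $t\uparrow\tau^Z$, since $Z^u_t\downarrow0$ while $Y^*_t$ stays positive and finite. Hence $F^*$ first reaches the shift $X$ exactly at $\tau^Z$, proving the claim. I expect this last step to be the main obstacle: near the boundary two effects compete, the vanishing of $Z^u$ and the shrinking of the cushion $F^*-X$, and disentangling them requires the observation that $\gamma_z^\perp$ is proportional to $Y^*$, which reduces $Y^*$ to an autonomous SDE bounded away from $0$ and $\infty$, so that the explosion of the marginal utility is carried entirely by $Z^u$.
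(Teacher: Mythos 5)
Your proposal is correct and follows the same skeleton as the paper's proof: consistency of the pair via Propositions \ref{PropEqZu} and \ref{Proptau}, verification of Assumption \ref{HypBarU} through the shifted utility $\bar U(t,z)=Z^u_t z^{1-\theta}/(1-\theta)$ with $B^1_t=|b_t|$, $\zeta_t=\|\delta_t\|$, verification of Assumption \ref{hypVbis} through the dual marginals, and then Theorem \ref{ThOpt} combined with Proposition \ref{OptimalStrategyCRRA} for the optimal strategy. (Incidentally, your dual computation $\tilde V_\rho(t,y)=-1-(\omega^r_tZ_t(p^{min}_t)^{1-\theta})^{1/\theta}y^{-1/\theta}$, hence $B_t=(\omega^r_tZ_t(p^{min}_t)^{1-\theta}/Z^u_t)^{1/\theta}$, is right; the linear term in the paper's displayed formula for $\tilde V$ carries a sign slip, the support function of $[1,\infty[$ contributing $-y$.) The one place you genuinely deviate is the identification $\tau^Z=\inf\{t\geq 0;\,F^*_t=X_t\}$. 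The paper solves the two linear SDEs explicitly: it writes $F^*_t-X_t$ and $Z^u_t$ as exponentials (from Propositions \ref{OptimalStrategyCRRA} and \ref{PropEqZu}), multiplies, observes that the $(Z_s\omega^r_s/Z^u_s)^{1/\theta}$ terms cancel, and recognizes the state price density $Y$ in the resulting identity $(F^*_t-X_t)(Z^u_t)^{-1/\theta}=(F^*_0-X_0)(Z^u_0)^{-1/\theta}Y_t^{-1/\theta}$. You instead reach the same identity structurally, by invoking Proposition \ref{MargU}(ii) together with the observation that $\delta^X\in\mathcal{R}$ forces $\gamma_z^\perp(t,z)=U_z(t,z)\delta_t^\perp$, so that $Y^*_t=U_z(t,F^*_t)=Z^u_t(F^*_t-X_t)^{-\theta}$ solves the autonomous equation $dY^*_t=Y^*_t(-r_t\,dt+(\delta^\perp_t-\eta_t)\cdot dW_t)$, hence stays strictly positive and finite while $Z^u_t\downarrow 0$ at $\tau^Z$. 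Both arguments are equivalent in substance; yours avoids the explicit exponential computations and reuses the general machinery of Proposition \ref{MargU}, making transparent that the explosion of the marginal utility at the boundary is carried entirely by $Z^u$, whereas the paper's explicit formulas additionally exhibit the growth rate of the cushion $F^*-X$, which is of independent interest.
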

\noindent Observe that $p^*$  depends on the population dynamics only through the ratio $\omega_t^r/N^r_t$. Since all pensioners receive the same pension amount $p^*_t$, there is also  an intergenerational risk-sharing among pensioners. \\
For $t\geq \tau^Z$, $Z_t^u=0$ and thus the buffer fund utility $U(t,\cdot)\equiv 0$. At this stopping time, the buffer fund hits the boundary $X$.  In order to stay sustainable, the pension system parameters have to be updated, for instance by decreasing the sustainability bound  $\mathfrak{K}_t$,  the minimum pension amount $p^{min}_t(a)$ or the importance given to the pensioners' preferences. Observe that thanks to  \eqref{deftau}, the distribution $\tau^Z$ can be approximated numerically. In particular,  $\tau^Z$ decreases with the number of pensioners and the weight of  pensioner's preferences attributed by the social planner.  However,  $\tau^Z$  does not depend on the initial buffer fund amount $F_0$, due to the choice of dynamic power utilities.
\begin{proof}[Proof of Theorem \ref{ThResult}]
The consistency SDE is verified by Proposition \ref{PropEqZu} and \ref{Proptau}. It is straightforward to check Assumption \ref{HypBarU}, since $\bar{U}(t,z) = U(t,z-X_t)  = \frac{z^{1-\theta}}{1-\theta}$. \\
Furthermore,  $\tilde{V} (t,z) =z+\frac{\theta}{1-\theta} (\omega_t^r Z_t (p^{min}_t)^{1-\theta})^{\frac{1}{\theta}}z^{1-\frac{1}{\theta}}= z+\frac{\theta}{1-\theta} (\omega_t^r Z_t)^{\frac{1}{\theta}}\left(\frac{z}{p^{min}_t}\right)^{1-\frac{1}{\theta}}$, and it is straightforward to check that Assumption \ref{hypVbis} is verified.  The optimal strategy is then obtained from 
Proposition \ref{OptimalStrategyCRRA}.  
By Proposition \ref{OptimalStrategyCRRA}, and using the short notation  $\mathcal{E}(\int_0^t g_s \cdot d W_s)$ for  the Doléans-Dade exponential martingale\footnote{  $\mathcal{E}(\int_0^t g_s \cdot d W_s):=  \exp  \left(\int_0^t g_s \cdot d W_s- \demi \int_0^t   \vert \vert g_s  \vert \vert^2 ds \right)  $ .     }
$$(F_{t}^* - X_{t})=\big(F_{0}^* - X_{0}\big)
\exp \Big(\int_0^t (r_{s}-(N^r_s)^{1-\frac{1}{\theta}} (\frac{Z_s \omega_s^r}{Z_s^u})^{\frac{1}{\theta}  } +  \frac{1}{\theta}(\delta_s^{\mathcal{R}}+\eta_s)\cdot \eta_s \big)ds\Big)
\mathcal E \Big(\int_0^t   \frac{1}{\theta}(\delta_s^{\mathcal{R}}+\eta_s\big) \cdot dW_{s}\Big),$$
and by Proposition \ref{PropEqZu}, 
$$Z^u_t  =  Z^u_0 \exp  \Big(- \int_0^t  \left( ( (1-\theta)r_s +   \frac{(1-\theta)}{2\theta}   \vert \vert { \delta_s^{\mathcal{R}}+ \eta_s}\vert \vert^{2}+    \theta (N^r_s)^{1-\frac{1}{\theta}} (\frac{Z_s \omega_s^r}{Z_s^u})^{\frac{1}{\theta}  }  ds  \Big)  \mathcal E \Big(\int_0^t  \delta_s \cdot dW_s \right).$$
Then, 
$$(F_{t}^* - X_{t})(Z^u_t )^{-\frac{1}{\theta}}=\big(F_{0}^* - X_{0}\big)(Z^u_0)^{-\frac{1}{\theta}  }
\exp \Big( \frac{1}{\theta}\int_0^t (r_{s} +  \frac{1 }{2}   \vert \vert { \delta_s^{\perp}- \eta_s}\vert \vert^{2}) ds
- \int_0^t   \frac{1}{\theta}(\delta_s^{\perp}-\eta_s) \cdot  dW_{s}\Big).$$
The previous equation can be rewritten as 
$$(F_{t}^* - X_{t})(Z^u_t )^{   \frac{-1}{\theta}}=\big(F_{0}^* - X_{0}\big)(Z^u_0)^{   \frac{-1}{\theta}}  (Y_t)^{   \frac{-1}{\theta}}, $$
where we recognize the state price density process $Y$ 
 $$Y_t= \exp \left(-  \int_0^t r_{s} ds \right) \mathcal{E} \Big( \int_0^t   (\delta_s^{\perp}-\eta_s) \cdot  dW_{s}\Big).$$
In particular $\tau^Z  = \inf \{t \geq 0 ; \; F^*_t = X_t \}$.
\end{proof}

\noindent Section \ref{calculsex2} explains how to extend the results obtained  for   Example \ref{Ex1} to  Example \ref{Ex2}. 

\subsection{Age-dependent  pensions}\label{calculsex2}

  In Example  \ref{Ex2}   of age-dependent  pension and  utility $\bar v$,   the aggregate utility of pension $V$  is a complex aggregation between cohorts given by \eqref{EqVexample2}
$$V(t,\rho) = \int^\infty_{a_r} \bar v(t, a, \rho \, p_{ret}(a_r + t-a)e^{\int_{a_r+t-a}^t\lambda_u du} ) \, n(t,a) da.$$
We assume that  pensioner utility depends on the age $a$ through the shift $p^{min}(t,a) $ in the shifted power utility
$$\bar v(t,a,p) = Z_t \frac{(p-p^{min}_{t}(a) )^{1-\theta}}{1-\theta} \quad \mbox{  and }  \quad v(t,a, \rho) =  (p^{min}_{t}(a))^{1-\theta}  Z_t  \frac{(\rho - 1)^{1-\theta}}{1-\theta}.$$
Therefore in  the  case  of power utility, the aggregate  utility of pension has  the following multiplicative form
$$V(t,\rho)= \tilde{\omega}^r_t Z_t \frac{(\rho - 1)^{1-\theta}}{1-\theta}, \quad \quad 
\mbox{ with } \quad  \tilde{\omega}^r_t= \int_{a_r}^\infty  (p^{min}_{t}(a))^{1-\theta}   n(t,a) da.$$
Observe that the problem is formulated   similarly than  in  Example  \ref{Ex1}, with a different  weight $\tilde{\omega}^r_t$ 
and  in this case $P_{t}^{min}=\int_{a_r}^\infty  p^{min}(t,a)   n(t,a) da$. Thus similar computations as in Section  \ref{calculsex1} yield
\begin{align*}
 \rho^*_t & = V_\rho^{-1}(t, P_{t}^{min}U_z(t,F_t^*))\\
 & =  V_\rho^{-1}(t, P_{t}^{min}Z_t^u(F_t^*-X_t)^{-\theta} )\\
 & =1+ \left(\frac{Z_t^u(F_t^*-X_t)^{-\theta} P_{t}^{min}}{Z_t \tilde \omega^r_t}\right)^{-\frac{1}{\theta}}\\
 & =1+  (F_t^* -X_t) \left(\frac{Z_t   }{ Z_t^u}\frac{ \int_{a_r}^\infty  (p^{min}_{t}(y))^{1-\theta}   n(t,y) dy} {  \int_{a_r}^\infty p^{min}_{t}(y)   n(t,y) dy }\right)^{\frac{1}{\theta}} . 
\end{align*}
Therefore the optimal strategy in this Example \ref{Ex2}  with CRRA utility is (on  $[0,\tau^Z[$) 
\begin{numcases}{}
 \label{EqPiPowex2}
\nonumber \pi_t^* =  \delta_t^{X} +  \frac{1}{\theta}(F_t^*- X_t) (\delta_t^{\mathcal{R}}+\eta_t) \\
\nonumber p^{*}_{t}(a) = p^{min}_{t}(a)\Big( 1 +    (F_t^*  - X_t)  \left(\frac{Z_t   }{ Z_t^u}\frac{ \int_{a_r}^\infty  (p^{min}_{t}(y))^{1-\theta}   n(t,y) dy} {  \int_{a_r}^\infty p^{min}_{t}(y)   n(t,y) dy } \right)^{\frac{1}{\theta}} \Big).
\end{numcases}
Observe that in Examples \ref{Ex1} and \ref{Ex2},  the portfolio are the same.
What differs are the pensions. Nevertheless if
$p^{min}_{t}(a)=p^{min}_t$ does not depend on the age, then  the optimal pension $p^{*}_{t}(a)$ simplifies in $p^{*}_{t}(a)=p^{*}_t=  p^{min}_t +   (F_t^*  - X_t) \left(\frac{Z_t   }{ Z_t^u}\right)^{\frac{1}{\theta}}$. It is coherent with Example \ref{Ex1} since it corresponds to the particular case of Example  \ref{Ex1} in which each pensioner has weight 1, that is $\omega_t^r=N_t^r$.\\
 The increase of the pension amount with age depends on the indexation rate $\lambda_t$ in $p^{min}_t(a)$ and on the optimal adjustment $\rho^*_t = 1 +    (F_t^*  - X_t)  \left(\frac{Z_t   }{ Z_t^u}\frac{ \int_{a_r}^\infty  (p^{min}_{t}(y))^{1-\theta}   n(t,y) dy} {  \int_{a_r}^\infty p^{min}_{t}(y)   n(t,y) dy } \right)^{\frac{1}{\theta}} $. As in  Example~ \ref{Ex1}, $\rho^*$ still depends on the  importance attributed to the  pensioners' preferences  with respect to  $Z^u$. The pension can also be written as follows, to make appear the relative weight of cohort $a$, captured through $p^{min}_{t}(a)$, with respect to the other cohorts:
 $$ p^{*}_{t}(a) = p^{min}_{t}(a) +    (F_t^*  - X_t)  \left(\frac{Z_t   }{ Z_t^u}\frac{  \int_{a_r}^\infty (\frac{p^{min}_{t}(y)}{p^{min}_{t}(a)} )^{1-\theta}    n(t,y) dy }{ \int_{a_r}^\infty  (\frac{p^{min}_{t}(y)}{p^{min}_{t}(a)} )   n(t,y) dy} \right)^{\frac{1}{\theta}}.
$$

\bigskip

\paragraph{Conclusion} This paper  designs  a social planner's dynamic decisions criterion   under sustainability, adequacy and fairness constraints.
The  optimal   investment/pension policy is derived when the social planner can invest in/borrow from a buffer fund,  with the aim to  provide a better demographic and financial risk-sharing across generations.  This flexible modeling can be easily extended  to heterogenous cohorts or open populations, thus considering also intra generational risk sharing.\\
The explicit computations for the optimal policies allows   these  theoretical results to be applied to an empirical setting with real data. For instance, an interesting application will consist in computing  and analyzing  actuarial fairness criteria for each generation,  and to test the impact of demographic shocks (such as the "baby boom") on optimal contribution/benefit plans.


\bibliographystyle{alpha}
\bibliography{RandomRef}

 \end{document}